\definecolor{lightblue}{RGB}{231,255,255}
\definecolor{lightred}{RGB}{255,224,224}
\definecolor{lightgreen}{RGB}{224,255,224}
\definecolor{lightyellow}{RGB}{255,255,224}
\definecolor{lightpurple}{RGB}{255,224,255}
\definecolor{darkerred}{RGB}{64,0,0}
\definecolor{darkred}{RGB}{128,0,0}
\definecolor{darkblue}{RGB}{0,0,128}
\definecolor{darkgreen}{RGB}{0,128,0}
\definecolor{darkpurple}{RGB}{128,0,128}
\definecolor{black}{RGB}{0,0,0}
\newcommand{\colorpar}[3]{}
\newcommand{\marginremark}[3]{}
\newcommand{\remarkREM}[1]{}
\newcommand{\remarkPRD}[1]{}
\newcommand{\remarkMDL}[1]{}
\newcommand{\CEB}[1]{}
\newcommand{\CEBwashere}{}
\def\THICKhrulefill{\leavevmode \leaders \hrule height 5pt\hfill \kern \z@}
\newcommand{\hrmkPRD}[1]{}
\newcommand{\hrmkREM}[1]{}
\newcommand{\hrmkCEB}[1]{}
\newcommand{\fontrft}[1]{\ensuremath{\mathsf{#1}}}
\newcommand{\rft}{\text{RFT}}
\newcommand{\E}{\mathcal{E}}
\newcommand{\PAND}{\fontrft{pand}}
\newcommand{\AND}{\fontrft{and}}
\newcommand{\OR}{\fontrft{or}}
\newcommand{\VOT}{\fontrft{vot}}
\newcommand{\FDEP}{\fontrft{fdep}}
\newcommand{\SP}{\fontrft{sbe}}
\newcommand{\SG}{\fontrft{sg}}
\newcommand{\BE}{\fontrft{be}}
\newcommand{\RBOX}{\fontrft{rbox}}
\newcommand{\type}{\textrm{\textup{t}}}
\newcommand{\lab}{\textrm{\ensuremath{l}}}
\newcommand{\inputs}{\ensuremath{i}}
\newcommand{\spareinputs}{\ensuremath{si}}
\newcommand{\edges}{\text{E}}
\newcommand{\ver}{\ensuremath{V}}
\newcommand{\arity}{\mathcal{\#}}
\newcommand{\code}[1]{\text{\small\ttfamily #1}} 
\newcommand{\tv}[1]{\textsf{\textit{\small #1}}} 
\newcommand{\fig}{{{\normalfont\textsmaller{\montserratalt{FIG}}}}\xspace}
\newcommand{\clocks}{\mathcal{C}}
\newcommand{\actions}{\mathcal{A}}
\newcommand{\inactions}{\actions^{\mathsf i}}
\newcommand{\outactions}{\actions^{\mathsf o}}
\newcommand{\coactions}{\actions^{\mathsf u}}
\newcommand{\states}{\mathcal{S}}
\newcommand{\trans}[1][]{\xrightarrow{\, {#1} \, }}
\newcommand{\dedrule}[2]{\frac{#1}{#2}}
\newcommand{\iosa}{IOSA}
\newcommand{\I}{\mathcal{I}}
\newcommand{\pll}{||}
\newcommand{\act}{\mathit{a}}
\newcommand{\actb}{\mathit{b}}
\newcommand{\activeck}{\mathit{active}}
\newcommand{\enablingck}{\mathit{enabling}}
\newcommand{\borel}[1][]{%
\ifthenelse{\equal{#1}{}}{\mathscr{B}}{\mathscr{B}(#1)}%
}
\renewcommand{\emptyset}{\varnothing}
\renewcommand{\phi}{\varphi}
\newcommand{\R}{\ensuremath{\mathbb{R}}}
\newcommand{\N}{\ensuremath{\mathbb{N}}}
\newcommand{\pathWrite}{\mathit{Path}}
\newcommand{\anypath}[1]{  
  \ifthenelse{\equal{#1}{}}
  {\ensuremath{\pathWrite}}
  {\ensuremath{\pathWrite(#1)}}
}
\newcommand{\finpath}[1]{
  \ifthenelse{\equal{#1}{}}
  {\ensuremath{\pathWrite^*}}
  {\ensuremath{\pathWrite^*(#1)}}
}
\newcommand{\omegapath}[1]{
  \ifthenelse{\equal{#1}{}}
  {\ensuremath{\pathWrite^\omega}}
  {\ensuremath{\pathWrite^\omega(#1)}}
}
\title{ A compositional semantics for Repairable Fault Trees with general
        distributions
        \thanks{Supported by SeCyT-UNC 05/BP12, 05/B497 and ERC grant 695614 (POWVER).}
		\thanks{Also by NWO project 15474 (\emph{SEQUOIA}) and EU project 102112 (\emph{SUCCESS}).}}
\author{Ra\'ul Monti\inst{1}\inst{2},
Pedro R. D'Argenio\inst{1}\inst{2}\inst{3},
Carlos E. Budde\inst{4}} 
\institute{
Universidad Nacional de C\'ordoba, FAMAF, C\'ordoba, Argentina
\and CONICET, C\'ordoba, Argentina
\and Saarland University, Department of Computer Science, Saarbr\"ucken, Germany
\and UTWENTE}
\begin{document}

\maketitle

\remarkREM{Authors in alphabetical order?}

\begin{abstract}
Fault Tree Analysis (FTA) is a prominent technique in industrial and scientific
risk assessment. Repairable Fault Trees (RFT) enhance the classical Fault Tree
(FT) model by introducing the possibility to describe complex dependent repairs
of system components. Usual frameworks for analyzing FTs such as BDD, SBDD,
and Markov chains fail to assess the desired properties over RFT complex models,
\CEBwashere
either because these become too large, or due to cyclic behaviour introduced by
dependent repairs. Simulation is another way to carry
out this kind of analysis. In this paper we review the RFT model with Repair
Boxes as introduced by Daniele Codetta-Raiteri. We present compositional
semantics for this model in terms of Input/Output Stochastic Automata, which
allows for the modelling of events occurring according to general continuous
distribution. Moreover, we prove that the semantics generates (weakly)
deterministic models, hence suitable for discrete event simulation, and
prominently for Rare Event Simulation using the \fig tool.
\hrmkCEB{Ponerme como autor con \texttt{\textbackslash{thanks}} para ``NWO project 15474 (\emph{SEQUOIA}) and EU project 102112 (\emph{SUCCESS})''}
\end{abstract}

\section{Introduction}\label{sec:introduction}

Fault Tree Analysis is a prominent technique for dependability assessment of
complex industrial systems.
\CEBwashere
Standard or \emph{Static Fault Trees} (SFTs~\cite{haasl1981fault})
are DAGs whose leafs are called Basic Events (BE), and usually represent the
failure of a physical system component.
\CEBwashere
Each leaf is equipped with a failure rate or discrete probability, indicating
the frequency at which
the component breaks. The other FT nodes are called gates, and they
model how basic components failures combine to induce more complex system
failures, until
\CEBwashere
the failure of interest (the \emph{top event} of the tree) occurs.
SFTs thus encode a logical formula. One of the most efficient analysis
techniques uses Binary Decision Diagrams (BDD) to represent the formula,
and then perform dependability studies using specialised algorithms.
This assumes the absence of stochastic dependency among BEs.

Many extensions to SFTs allow for further modelling capabilities. One of the most
studied 
\CEBwashere
are \emph{Dynamic Fault Trees} (DFTs~\cite{159800,DBLP:conf/dsn/JungesGKS16}).
DFTs add gates to describe time- and order-dependence among the tree nodes,
in contrast to the plain
combinatorial behavior of SFT gates. New analysis methods were introduced in
order to capture temporal requirements, such as cut sequences, translation
to Markov models~\cite{159800,159800,DBLP:conf/dsn/BoudaliCS07},
Sequence BDDs~\cite{DBLP:journals/ress/GeLYZC15,DBLP:journals/ress/Rauzy11,DBLP:journals/ress/XingSD11},
algebraic approaches~\cite{DBLP:journals/tr/MerleRLB10,amari2003new},
simulation, and combination and optimisations 
\CEBwashere
thereof~%
\cite{DBLP:journals/tse/BobbioFGP03,gulati1997modular}.

\emph{Repairable Fault Trees} (\rft~\cite{bobbio2004parametric,DBLP:conf/dsn/RaiteriIFV04,DBLP:conf/valuetools/BeccutiRFH08,DBLP:conf/dsn/BoudaliCS07})
increase FTs expressiveness by introducing the possibility to model complex
inter-dependent repair mechanisms for basic components,
\CEBwashere
i.e.\ system components that produce the basic events.
In former models such as DFT, certain
notions of repair had been addressed by allowing components to be repaired
independently. Nevertheless, this is not
\CEBwashere
usual in real world systems, where repair scheduling, resources management, and
maintenance play an important role.
%
To address this, we will focus on the \emph{Repair Box} model 
(RBOX~\cite{franceschinis2002towards,DBLP:conf/dsn/RaiteriIFV04}).
A RBOX models
a repair unit in charge of repairing certain BEs following certain
policy. Different repair policies such as \emph{first come first serve},
\emph{priority service}, \emph{random} or \emph{nondeterministic choice}, 
allow to analyze the impact of taking these decisions in the real
system. The introduction of these boxes greatly changes the dynamic of the tree.
Quantitative analyses are no longer a combinatorial calculation, since the
evolution of the system over time has to be
considered~\cite{DBLP:journals/csr/RuijtersS15}. Furthermore, traditional
qualitative analysis such as \emph{cut sets} lack of utility by not taking
repairability into account. Traditional quantitative analysis is also discarded
by the cyclic behavior introduced by this model which disallows to use
combinatorial solutions proposed for non repairable FTs and require a state
based solution instead~\cite{DBLP:journals/tse/BobbioFGP03}.

\hrmkCEB{Voy a sacar pedazos de estos párrafos y armar un bloque, que voy a presentar como la \texttt{\textbackslash{subsubsection}}\{\textbf{Related work}\}.}

In this work we present a formal definition of \emph{Repairable Fault Trees}
(\rft), along with its semantics given in terms of Input/Output Stochastic
Automata (\iosa) \cite{DBLP:conf/formats/DArgenioLM16,DArgenioMonti18}. We show
that the underlying \iosa\ semantics of the \rft\ specification is \emph{weakly
deterministic}, that is, the non-determinism present in the \iosa\ model is
spurious. Hence the model is equivalent to a fully stochastic model and thus
amenable to discrete event simulation. \iosa\ allows us to model \rft{s} general
continuous failure and repair distributions.


\remarkREM{Related work can also start from here on... but maybe what Carlos
thinks to do is better...}

A variety of works address the problem of defining a rigorous syntax and
semantics to FT, DFT, and
\rft~\cite[etc.]{DBLP:conf/issre/CoppitSD00,bobbio2004parametric,DBLP:conf/dsn/BoudaliCS07,DBLP:conf/valuetools/BeccutiRFH08,DBLP:conf/dsn/BoudaliCS07}.
They usually differ, for example, in the types and meaning of gates,
expressiveness power, how spare elements are claimed and how repair races are
resolved. Presence of non-deterministic situations is also a main discording
issue. Comprehensive surveys on FTs can be found in~\cite{DBLP:conf/dsn/JungesGKS16}
and~\cite{DBLP:journals/csr/RuijtersS15}. In Section \ref{sec:rftsemantics} we
formally define the syntax for \rft{s} in a similar manner as
\cite{DBLP:conf/atva/BoudaliCS07} has done for DFTs. Furthermore, in order to
define the compositional and weakly deterministic semantics using \iosa, we
discuss different concerns about determinism on \rft{s}.

As discussed before, RFT analysis requires a state space solution. This usually
means one of the following two approaches. A first approach would be translating
the model to a Markov model, applying as much optimisations as possible during
the modelling and analysis in order to relieve the state explosion problem as
much as possible. This is the approach followed by many works such
as~\cite{DBLP:conf/valuetools/BeccutiRFH08,DBLP:journals/tse/BobbioFGP03,bobbio2004parametric}.
Two main drawbacks can be pointed out on this approach. The first one is that no
matter which existing optimisation methods are used, there is no guarantee that
there will be a significant state space reduction in general models. This is a
specially difficult situation in big and complex industrial size systems
analysis involving repair. A second drawback is the restriction to
exponentially distributed events, not allowing to correctly model real life
systems where timing is governed by other continuous distributions. This is the
case for example of phenomena such as timeouts in communication protocols, hard
deadlines in real-time systems, human response times or the variability of the
delay of sound and video frames (so-called jitter) in modern multi-media
communication systems, which are typically described by non-memoryless
distributions such as uniform, log-normal, or Weibull
distributions~\cite{DArgenioMonti18}.
A second approach to RFT analysis would be recurring to simulation, which does
not need the full state space of the model to be constructed, and does not
impose \emph{per se} the restriction to any kind of probabilistic distributions.
The main problem when confronting simulation is the big amount of computation
needed to reach a sufficiently accurate result. This is a most relevant issue
when analyzing highly dependable or fault tolerant systems, where the failure
probability is very small and plane Monte Carlo simulation becomes infeasible.
To face this problem one can make use of Rare Event Simulation techniques such
as Importance Splitting or Importance
Sampling~\cite{DBLP:reference/npe/Villen-AltamiranoV11,thesis/unc/Budde2017,DBLP:conf/epew/BuddeDH15,Rubino:2009:RES:1643623}.

Our main contribution in this work consists in a method for precisely modelling
RFTs with generally distributed events. Furthermore, by yielding a deterministic
\iosa\ model, thus amenable to discrete event simulation, we are able to
analyze it on the FIG Rare Event Simulation
Tool~\cite{DBLP:conf/epew/BuddeDH15,budde2016compositional}, greatly improving
efficiency when analyzing highly dependable systems. Also the recent work
\cite{DBLP:conf/safecomp/RuijtersRBS17} takes on the matter of using rare event
simulation to analyze DFTs with complex repairs. Nevertheless, they restrict to
Exponential and Erlang distributions and they finally conduce their analysis
over a Markov model hence suffering of potential states space explosion.

\section{Repair Fault Trees}\label{sec:RepairFaultTrees}

In Fig. \ref{fig:gates} we depict the set of \rft elements that we consider in
this work. Each of them has a set of inputs where to connect its subtrees, and
an output (if applicable) to propagate the failure, repair and other signals.
The propagation of a failure and its subsequent repair starts at the leafs of
the fault tree, including only (spare) basic elements. When one of them fails,
or gets repaired, it instantaneously propagates the event to the gates to which
it is connected. The state of a gate changes based on the signals it receives
from its inputs and propagates its new state to the gates it serves as input.
Thus, a proper combination and timing of fail signals may change a gate's state
to failing, and similarly, a proper combination and timing of repair signals may
change it back to a working state. This very much depends on the type of gate.
The state changes will at the same time trigger output signals accordingly. Not
only fail and repair signals, but also other signals may be produced, as it can
be in the case of repair boxes, which may output a start repairing signal to any
of their input basic elements.
\remarkREM{Esta explicación se puede abreviar.}

\begin{wrapfigure}[19]{l}{0.62\textwidth}
  \centering
  \includegraphics[scale=.18]{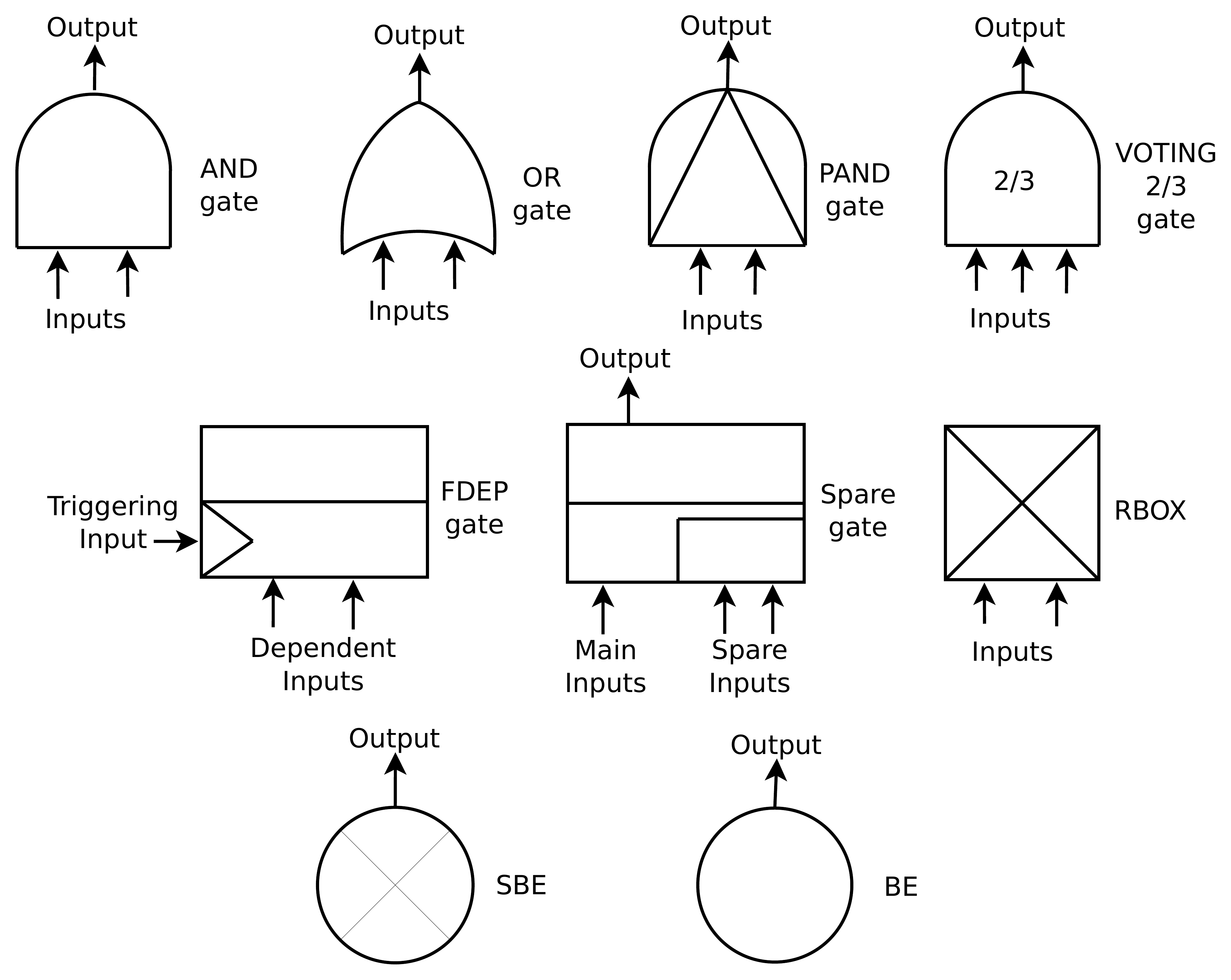}
  \caption{RFT elements}\label{fig:gates}
\end{wrapfigure}
The intuition about the behavior of each gate is as follows. An AND gate fails
whenever all its inputs fail, and gets repaired (stop failing) when at least one
of its inputs is repaired. An OR gate fails whenever at least one of its inputs
fails and is repaired when all of its inputs are repaired. A $k/n$ VOTING gate
fails whenever at least $k$ of its $n$ inputs fail and stops failing if at most
$k-1$ of its inputs remain failing. A PAND gate fails whenever its inputs fail
from left to right, inducing an order on the failure occurrence, and it is
repaired if the last input is repaired. A \emph{functional dependency gate}
(FDEP) has $n+1$ inputs. The fail signal of one of its inputs (the triggering
one) makes all the other inputs inaccessible to the rest of the system. Note
that the dependent inputs do not necessarily fail, and they will be accessible
again as soon as the triggering component is repaired (note the difference with
\cite{DBLP:conf/dsn/BoudaliCS07,DBLP:conf/safecomp/RuijtersRBS17} where
dependent BEs do fail). In fact this gate can be easily replaced by a system of
OR gates~\cite{xing2009efficient}. A \emph{spare basic element} (SBE) is a
special case of BE which can be enabled and disabled, and can be used as spare
parts for other BEs through spare gates. A same SBE can be shared by several
spare gates, and different sharing policies are introduced for this purpose. A
\remarkREM{Le explicacion de la SG es un poco redundante.}
\emph{spare gate} (SG) allows to replace a basic element by one of several spare
basic elements in case it fails. Each spare gate has a main input and $n$
spare parts inputs. The main input can only be a BE. The spare inputs can only
be SBEs. As soon as the main input fails, the SG uses its own policy to ask for
the replacement by one of its spare inputs. The SG will fail whenever it does
not obtain a replacement, and will signal repair whenever the main input gets
repaired or a spare input is obtained. If an in-use replacement fails the SG
will look for a new one. If the main input is repaired, the SG will free the
acquired spare input, in case there is one. A \emph{repair box} (RBOX) is the
unit in charge of managing the repairing of failed BEs and SBEs. They have $n$
inputs, which are the elements administered for repairing, and a dummy output. A
RBOX policy determines in which order the failing elements will be repaired.
Also notice that a RBOX can only repair one of its inputs at a time, while
the rest of its failing inputs are waiting for repair.

\section{Input/Output Stochastic Automata}\label{sec:iosa}

Input/Output Stochastic Automata
\cite{DBLP:conf/formats/DArgenioLM16,DArgenioMonti18} is a modelling formalism
tailored to model stochastic systems for the purpose of simulation. \iosa\
combine continuous probability jumps from Stochastic Automata, with discrete
event synchronisation for a compositional style of modelling. \iosa{s} use
continuous random variables to control and observe the passage of time. These
variables, called clocks, are set to a value according to their associated
probability distribution, and, as time evolves, count down all at the same rate
until they reach the value of zero. Clocks control the moments when actions are
taken, and thus allow to model systems where events occur at random continuous
time stamps.  Output and input transitions can be used to synchronize and
communicate between different \iosa{s}. Output transitions are autonomous, while
inputs occurrence depends on synchronisation with outputs. A transversal
classification for actions allows to mark them as urgent or non urgent.  While a
non-urgent output is controlled by the expiration of clocks (i.e., clocks
reaching the value zero), an urgent output action is taken as soon as the state
in which it is enabled is reached.  Though an \iosa\ may be non-deterministic,
\cite{DArgenioMonti18}~provides a set of sufficient conditions that guarantee
weak determinism (i.e. only spurious non-determinism is present).
Furthermore, such conditions can be checked with a polynomial algorithm on the
components of the model.
\begin{definition}\label{def:iosa}
  An \emph{input/output stochastic automaton with urgency} (\iosa) is a structure
  $(\states,\actions,\clocks,\trans,C_0,s_0)$, where $\states$
  is a (denumerable) set of states, $\actions$ is a (denumerable) set
  of labels partitioned into disjoint sets of \emph{input} labels
  $\inactions$ and \emph{output} labels $\outactions$, from which a subset
  $\coactions\subseteq\actions$ is marked as \emph{urgent},
  $\clocks$ is a (finite) set of clocks such that each $x\in\clocks$ has
  an associated continuous probability measure $\mu_x$ on $\R$
  s.t.\ $\mu_x(\R_{>0})=1$,
  ${\trans}\subseteq\states\times\clocks\times\actions
  \times\clocks\times S$ is a transition function, $C_0$ is the
  set of clocks that are initialized in the initial state, and $s_0\in
  \states$ is the initial state.
  In addition it should satisfy the following
  constraints:
  \begin{enumerate}[\rm a]
    \renewcommand{\theenumi}{(\alph{enumi})}
  \item\label{def:iosa:input-and-commit-are-reactive}%
    If $s\trans[C,\act,C']s'$ and $\act\in \inactions\cup\coactions$,
    then $C=\emptyset$.
  \item\label{def:iosa:output-is-generative}%
    If $s\trans[C,\act,C']s'$ and $\act\in \outactions\setminus\coactions$,
    then $C$ is a singleton set.
  \item\label{def:iosa:clock-control-one-output}%
    If $s\trans[\lbrace x\rbrace,\act_1,C_1]s_1$ and
    $s\trans[\lbrace x\rbrace,\act_2,C_2]s_2$
    then $\act_1=\act_2$, $C_1=C_2$ and $s_1=s_2$.
  \item\label{def:iosa:input-enabled}%
    For every $\act\in\inactions$ and state $s$, there exists a
    transition $s\trans[\emptyset,\act,C]s'$.
  \item\label{def:iosa:input-deterministic}%
    For every $\act\in\inactions$, if $s\trans[\emptyset,\act,C_1']s_1$ and
    $s\trans[\emptyset,\act,C_2']s_2$, $C_1' = C_2'$ and $s_1 = s_2$.
  \item\label{def:iosa:clock-never-go-off-early}%
    There exists a function $\activeck:\states\rightarrow 2^{\clocks}$ such that:
    \begin{inparaenum}[i]
    \renewcommand{\theenumii}{\normalfont(\roman{enumii})}
    \item\label{def:iosa:clock-never-go-off-early:i}%
      $\activeck(s_0)\subseteq C_0$,
    \item\label{def:iosa:clock-never-go-off-early:ii}%
      $\enablingck(s)\subseteq \activeck(s)$,
    \item\label{def:iosa:clock-never-go-off-early:iii}%
      if $s$ is stable, $\activeck(s)=\enablingck(s)$, and
    \item\label{def:iosa:clock-never-go-off-early:iv}%
      if $t\trans[C,a,C']s$ then
            $\activeck(s)\subseteq(\activeck(t)\setminus C)\cup C'$.
    \end{inparaenum}
  \end{enumerate}
  where $\enablingck(s)=\{y\mid s\trans[\{y\},\_,\_]\_\}$, and $s$ is
  \emph{stable} if there is no $a\in\coactions\cap\outactions$ such
  that $s\trans[\emptyset,a,\_]\_$. ($\_$ indicates the existential
  quantification of a parameter.)
\end{definition}

Restrictions (a) to (f) are there to ensure that at most one non-urgent output
action is enabled at a time. If in addition the \iosa\ is closed (i.e., all
communications have been resolved and hence the set of inputs is empty) and all
its urgent actions are confluent (in the sense
of~\cite{DBLP:books/daglib/0067019}, see also Def.~\ref{def:confluence}) it
turns out all the non-determinism is spurious and does not alter the stochastic
behavior (i.e. regardless of how non-determinism is resolved, the stochastic
properties remain the
same)~\cite{DBLP:conf/formats/DArgenioLM16,DArgenioMonti18}.  We call this
property \emph{weak determinism}.

\iosa{s} are closed under parallel composition which is defined
according to rules in Table~\ref{tb:parcomp}.  In order to avoid
unintended behavior, the component \iosa{s} are requested to be
\emph{compatible}, that is, they should not share output actions nor
clocks, and be consistent with respect to urgent actions.
\begin{table}[t]
  \vspace{-2em}
  \caption{Parallel composition on \iosa}\label{tb:parcomp}
  \begin{minipage}{.49\textwidth}
  \begin{gather}
    \dedrule{s_1\trans[C,\act,C']_1s_1'}{s_1\pll s_2\trans[C,\act,C']s_1'\pll s_2}%
    \ \act\in \actions_1{\setminus}\actions_2 \label{eq:par1}
  \end{gather}
  \end{minipage}
  \hfill
  \begin{minipage}{.49\textwidth}
  \begin{gather}
    \dedrule{s_2\trans[C,\act,C']_2s_2'}{s_1\pll s_2\trans[C,\act,C']s_1\pll s_2'}%
    \ \act\in \actions_2{\setminus}\actions_1 \label{eq:par2}
  \end{gather}
  \end{minipage}

  ~\hfill
  \begin{minipage}{.50\textwidth}
  \begin{gather}
    \dedrule{s_1\trans[C_1,\act,C'_1]_1s_1' \quad s_2\trans[C_2,\act,C'_2]_2s_2'}%
            {s_1\pll s_2\trans[C_1\cup C_2,\act,C'_1\cup C'_2]s_1'\pll s_2'}%
            \ \act\in \actions_1{\cap}\actions_2  \label{eq:par3}
  \end{gather}
  \end{minipage}
  \hfill~
\end{table}

\section{\iosa\ symbolic language}\label{sec:symbolicLanguage}

\begin{wrapfigure}[]{l}{0.59\textwidth}
\vspace{-2em}
\begin{lstlisting}
module BE
 fc, rc : clock;
 inform : [0..2] init 0;
 broken : [0..2] init 0;

 [fl!]     broken=0 @ fc -> (inform'=1) & (broken'=1);
 [r??]     broken=1 -> (broken'=2) & (rc'=$\gamma$);
 [up!]     broken=2 @ rc -> (inform'=2) &
                            (broken'=0) & (fc'=$\mu$);

 [f!!] inform=1 -> (inform'=0);
 [u!!] inform=2 -> (inform'=0);
endmodule
\end{lstlisting}
\caption{\small Basic Element \iosa\ symbolic model.}
\label{fig:basicelement}
\vspace{-2em}
\end{wrapfigure}
We present a symbolic language to describe an \iosa\ model.  This language is the
input language of the tool
FIG~\cite{budde2016compositional,thesis/unc/Budde2017} and has some strong
resemblance with the PRISM modelling
language~\cite{DBLP:conf/cpe/KwiatkowskaNP02}. \iosa{s} compositional style of
modelling  is also reflected in the language, where each component is modeled
separately by what we call a \emph{module}. A module is composed of a set of
variables, whose valuation represent the actual state of the component, a set of
clocks corresponding to the enabling clocks for non urgent transitions, and a
set of transitions which symbolically describe the possible jumps between states
(changes of valuations and resetting of clocks). Fig. \ref{fig:basicelement}
models a basic element as an example. Variables can be of integer (with finite
range) or boolean type.  As we will see later, also arrays can be defined as
variables. An initial value for each variable is determined after the keyword
\code{init}.  Clocks measures are defined at the transitions where they are
reset.  A transition is described by the name of the action which takes place, a
guard that defines the origin states, an enabling clock (only for the case of
non-urgent output transitions), a condition describing the target states, and
the set of clocks to be reset.
A quick overview of Fig. \ref{fig:basicelement} will help to further
understand our symbolic language: Two clocks, \code{fc} and \code{rc},
are defined at line $2$. These clocks will be used as enabling clocks
for transitions at lines $6$ and $8$, and reset on transitions at
lines $7$ and $9$ where $\gamma$ and $\mu$ are the distribution
associated with \code{rc} and \code{fc}, respectively.  Lines $3$ and
$4$ define variables \code{inform} and \code{broken}, both of integer
type ranging between $0$ and $2$, and initialized with value $0$. Line
$6$ defines a set of non-urgent output transitions, which produce the
output action \code{fl}. More precisely, this line defines the set of
non-urgent transitions
$s\trans[\{\code{fc}\},\code{fl!},\emptyset]s'$, where $s$ meets the
condition \code{broken=0}, and $s'$ is the result of changing the
values of variables \code{inform} and \code{broken} to \code{1} while
other variables remain with the same values as those in state $s$.
The \code{@} symbol precedes the enabling clock for the transition
while the \code{->} symbol distinguishes between conditions for the
origin state and the target state. The conditions on the target state
are expressed as assignments to the next values of the variables,
indicated with an apostrophe. Line $7$ defines an urgent input
transition with label \code{r}. The double question marks after the
name indicates that it describes urgent input transitions. Urgent
output transition are indicated with double exclamation marks
(\code{!!}), non urgent input transitions with a single question mark,
and non-urgent output transitions with a single exclamation mark.
At the end of line $7$ we find the reset of the clock \code{rc} to a
value from a probability distribution \code{$\gamma$}. This line then
defines transitions $s\trans[\emptyset,\code{r??},\{\code{rc}\}]s'$,
where $s$ meets with condition \code{broken=1} and $s'$ is identical to
$s$ except for variable \code{broken} which has value \code{2}.
At line $11$, an urgent output transition is defined, indicating the
failure of this component through action \code{f!!}.  We will usually
use these urgent transitions to synchronize and communicate with other
modules.

The text of Fig.~\ref{fig:basicelement} is tacitly completed with
self-loops with all inputs in all constraints that are not explicitly
written.  For example, in Fig.~\ref{fig:basicelement}, the line
``\code{[r??] broken != 1 -> ;}'' is assumed to exist.

\section{A formal syntax for RFT and its semantics} \label{sec:rftsemantics}

In this section we present a formal definition of the \rft\ similar to those of
\cite{DBLP:conf/atva/BoudaliCS07,DBLP:journals/tdsc/BoudaliCS10} along with its
semantics given in terms of \iosa. Each element of a \rft\ is characterized by
a tuple consisting of its type, its arity (i.e. number of inputs), and possibly
other parameters like probability distributions for fail and repair events in a
BE.

\begin{definition}
  Let $n,m,k\in\N^+$, and let $\mu$, $\nu$ and $\gamma$ be continuous
  probability distributions.  We define the set $\E$ of elements of a
  \rft\ to be composed of the following tuples:
  \begin{itemize}
  \item%
    $(\BE, 0, \mu, \gamma)$ and $(\SP, 0, \mu, \nu, \gamma)$, which
    represents basic and spare basic elements, with no inputs, with an
    active failure distribution $\mu$, a dormant failure distribution
    $\nu$, and a repair distribution $\gamma$.
  \item%
    $(\AND, n)$, $(\OR, n)$ and $(\PAND, n)$, which represent AND, OR
    and PAND gates with $n$ inputs, respectively,
  \item%
    $(\VOT, n, k)$, which represent a $k$ from $n$ voting gate,
  \item%
    $(\FDEP, n)$, which represents a functional dependency gate, with
    $1$ trigger input and $n-1$ dependent ones. By convention the
    first input is the triggering one.
  \item%
    $(\SG, n)$, which represents a SPARE gate with one main input and
    $n-1$ spare inputs. By convention the first input is the main one.
  \item%
    $(\RBOX, n)$, which represents a RBOX element for $n$ BEs (or SBEs).
  \end{itemize}
\end{definition}

A \rft\ is a directed acyclic graph, for which every vertex $v$ is labeled with
an element $\lab(v)\in\E$.  An edge from $v$ to $w$ means that the output of $v$
is connected to an input of $w$.  Since the order of the inputs is relevant, we
give them in terms of a list $\inputs(w)$ instead of a set.  Similarly,
$\spareinputs(v)$ will list all the spare gates to which a spare basic element
$v$ is connected as an input. Let $\type(v)$ indicate the \emph{type} of $v$.
That is, $\type(v)$ is the first projection of $\lab(v)$.  Let $\arity(v)$
indicate the number of inputs of $v$, that is, it is the second projection
$\lab(v)$.

\begin{definition}
  \label{def:rft}

  A repair fault tree is a four-tuple
  $T=(\ver,\inputs,\spareinputs,\lab)$, where $\ver$ is a set of
  vertices, $\lab\colon \ver\rightarrow\E$ is a function labeling each
  vertex with a RFT element, $\inputs\colon \ver\rightarrow\ver^*$ is
  a function assigning $\arity(v)$ inputs to each element $v$ in
  $\ver$, and $\spareinputs\colon\ver\rightarrow\ver^*$ which indicate
  which spare gates manage each spare BE.
  The set of edges $\edges = \{(v,w) \in \ver^2 \mid \exists j \cdot v
  = (\inputs(w))[j]\}$ is the set of pairs $(v,w)$ such that $v$ is an
  input of $w$.  If such an edge exists, we will say that $v$ is
  connected to $w$ and $w$ to $v$.  In addition, a RFT $T$ should
  satisfy the following conditions:
  \begin{itemize}
  \item%
    The tuple $(\ver,\edges)$ is a directed acyclic graph (DAG).
  \item%
    $T$ has a unique top element, i.e. a unique element whose non
    dummy output is not connected to another gate.  That is, there is
    a unique vertex $v\in \ver$ such that for all $w\in \ver$,
    $(v,w)\notin E$ and $\type(v)\neq \FDEP, \RBOX$.
  \item%
    An output can not be more than once the input of a same gate.
    That is, for all $1\leq j,k \leq |\inputs(w)|$ with $\inputs(w)[j]
    = \inputs(w)[k]$, we have $j=k$.
  \item%
    Since FDEP and RBOX outputs are dummy, if $(v,w)\in E$ then
    $\type(v) \notin \{\FDEP,\RBOX\}$.
  \item%
    The inputs of a repair box can only be basic elements. I.e., if
    $(v,w) \in E$ and $\type(w)=\RBOX$ then either $\type(v)=\BE$ or
    $\type(v) = \SP$.
  \item%
    Each (spare) basic element can be connected to a single RBOX. I.e.,
    if $(v,w) \in E$ and $(v,w') \in E$ and
    $\type(w)=\type(w')=\RBOX$, then $w=w'$.
  \item%
    The spare inputs of a spare gate can only be spare basic elements,
    while its main input can only be a basic element.  I.e., if
    $(v,w)\in E$ and $\type(w)=\SG$ then $\type(\inputs(v)[0])=\BE$
    and for $j > 0, \type(\inputs(v)[j])=\SP$. Furthermore, a spare
    basic element can only be connected to a spare gate or a RBOX, i.e.,
    if $(v,w)\in E$ and $\type(v)=\SP$ then
    $\type(w)\in\{\SG,\RBOX\}$.
  \item%
    A spare basic element is an input of a spare gate, if and only if
    that spare gate is spare input of the spare basic element, i.e. for
    $v$ and $v'$ such that $\lab(v')=(\SP,0,\mu,\nu,\gamma)$ and
    $\lab(v)=(\SG,n)$, $(v',v)\in E$ if and only if there exists $j$
    such that $v=\spareinputs(v')[j]$.
  \item%
    A basic element can be connected to at most one spare gate, i.e. if
    $(v,w)\in E$ and $(v,w')\in E$ with $\type(w)=\type(w')=\SG$ and
    $\type(v)=\BE$ then $w=w'$.
  \item%
    If a basic element is connected to a spare gate then it can not be
    connected to a FDEP gate, i.e. if $(v,w)\in E$ and $\type(v)=\BE$
    and $\type(w')=\SG$, then there is no $(v,w')\in E$ such that
    $\type(w')=\FDEP$.
  \end{itemize}
\end{definition}

In the following, we present a parametric semantics for \rft\ elements.  This
will be used later to obtain the semantics for each vertex in a given \rft, and
the consequent semantics of the full model as a parallel composition of its
components.  In this section, we only give the semantics for BEs, AND gates, OR
gates, PAND gates, and RBOX. Remember that FDEP can be replaced by OR gates.
Similarly, voting gates can be modeled by a series of AND and OR gates (although
a simpler model can be found in Appendix \ref{apx:kfromn}).
In the design of the \iosa\ modules we should take into account the
communication of each element of a \rft\ with its children and parents.  For
instance a basic element has to communicate its failure and repair to those
gates for which it is an input. Similarly, a RBOX has to communicate to its
inputs a \emph{start repairing} signal. In order to do so, the semantics of each
element will be given by a function, which takes actions as parameters.

\begin{wrapfigure}[]{l}{0.58\textwidth}
  \begin{lstlisting}
module AND
 informf: bool init false;
 informu: bool init false;
 count: [0..2] init 0;

 [`\cv{f$_1$}`??] count=1 -> (count'=2) & (informf'=true);
 [`\cv{f$_1$}`??] count=0 -> (count'=1);
 [`\cv{f$_1$}`??] count=2 -> ;
 [`\cv{f$_2$}`??] count=1 -> (count'=2) & (informf'=true);
 [`\cv{f$_2$}`??] count=0 -> (count'=1);
 [`\cv{f$_2$}`??] count=2 -> ;

 [`\cv{u$_1$}`??] count=2 -> (count'=1) & (informu'=true);
 [`\cv{u$_1$}`??] count=1 -> (count'=0);
 [`\cv{u$_1$}`??] count=0 -> ;
 [`\cv{u$_2$}`??] count=2 -> (count'=1) & (informu'=true);
 [`\cv{u$_2$}`??] count=1 -> (count'=0);
 [`\cv{u$_2$}`??] count=0 -> ;

 [`\cv{f}`!!] informf & count=2 -> (informf'=false);
 [`\cv{u}`!!] informu & count!=2 -> (informu'=false);
endmodule
  \end{lstlisting}
  \caption{\small AND gate \iosa\ symbolic model.}\label{fig:and}
  \vspace{-1em}
\end{wrapfigure}
For a \emph{BE} element $e\in\E$, its semantics is a function
$[\![e]\!]:\actions^5\rightarrow\iosa$, where
$[\![(\BE,0,\mu,\gamma)]\!](\tv{fl},\tv{up},\tv{f},\tv{u},\tv{r})$ results in
the \iosa\ of Fig.~\ref{fig:basicelement}.  The state of a basic element is
defined by the fail clock \code{fc}, the repair clock \code{rc}, a variable
\code{signal} that indicates when to signal the failure or repair, and variable
\code{broken} to distinguish between broken and normal states.  A basic element
fails when clock \code{fc} expires (line 6) and immediately informs it with the
urgent signal \code{\tv{f}!!} at line 11.  As soon as the repair begins by the
corresponding connected repair box (line 7), clock \code{rc} is set. When it
expires, the component becomes repaired. Hence, \code{fc} is set again at line
8, and the repair is signaled with urgent action \code{\tv{u}!!}  at line 11. At
the starting state of an \iosa\ module all its clocks are set randomly according
to their associated distributions. Thus, \code{rc} is set at the initial state
and could eventually expire without having been set by a repair transition. This
is why we have to distinguish between cases when the BE is being repaired
(\code{broken=2}) from when it is not.

For an \emph{AND gate} element with two inputs, its semantics is a function
$[\![e]\!]:\actions^{6}\rightarrow\iosa$, where
$[\![(\AND,2)]\!](\tv{f},\tv{u},\tv{f$_1$},\tv{u$_1$},\tv{f$_2$},\tv{u$_2$})$ results in
the \iosa\ in Fig.~\ref{fig:and}. At lines 6 to 11, the AND gate gets informed
of the failure of either of its inputs. Upon failure of some input, we
distinguish between the case where the other input has already failed
(\code{count=1}) and the case where it has not (\code{count=0}).  In the first
case the AND gate has to move to a failure state, for which we set the
\code{informf} variable in order to enable the signaling of failure at line 20.
Furthermore in both cases we increase the value of \code{count} so that we take
note of the failure of an input.  A similar reasoning is done for the case of
the repairing of an input at lines 13 to 18.  In this case we have to set the
module to signal a repair when an input gets repaired at a state where both
inputs were failing (lines 13 and 16), by enabling transition at line 21.
From now on, we omit writing down self loops originated by \iosa's input
enabledness, such as lines $8, 11, 15$ and $18$ as they are assumed to be there.
Nevertheless, we remark that it is necessary to take them into account when
analyzing confluence in the next section.
The semantics for an OR gate is similar to the AND gate and can be found in
Appendix \ref{app:orGate}.

The semantics of a \emph{$n$ inputs repair box with priority policy}, is a
function $[\![(\RBOX,n)]\!]:\actions^{3*n}\rightarrow\iosa$, where
$[\![(\RBOX,n)]\!](\tv{fl}_0,\tv{up}_0,\tv{r}_0,...,\tv{fl}_{n-1},\tv{up}_{n-1},$
$\tv{r}_{n-1})$ results in the \iosa\ of Fig.~\ref{fig:rbox}.  The RBOX with
priority uses the array \code{broken[n]}
\begin{wrapfigure}{l}{0.65\textwidth}
  \vspace{-2em}
  \begin{lstlisting}
module RBOX
 broken[n]: bool init false;
 busy: bool init false;

 [`\cv{fl$_0$}`?] -> (broken[0]'=true);
 ...
 [`\cv{fl$_{n-1}$}`?] -> (broken[n-1]'=true);

 [`\cv{r$_0$}`!!] !busy & broken[0] -> (busy'=true);
 ...
 [`\cv{r$_{n-1}$}`!!] !busy & broken[n-1] & !broken[n-2]
               & ... & !broken[0] -> (busy'=true);

 [`\cv{up$_0$}`?] -> (broken[0]'=false) & (busy'=false);
 ...
 [`\cv{up$_{n-1}$}`?] -> (broken[n-1]'=false) & (busy'=false);
endmodule
  \end{lstlisting}
  \caption{\small RBOX with priority policy}\label{fig:rbox}
  \vspace{-2em}
\end{wrapfigure}
to keep track of failed inputs, updating it when it receives their fail signals
(lines 5 to 7) and up signals (lines 13 to 15).  At the same time, when not
busy, it sends repair signals to broken inputs (lines 9 to 12).  Guards ensure
the priority order for repairing. Note that instead of listening to the urgent
output signals of the input BEs, it listens for the non-urgent actions of the
transitions that trigger the failure or repair. This is done with the only
purpose of facilitating the confluence analysis over this module. Other types of
repair boxes can be modeled, taking into account different repairing policies.
(see App.~\ref{apx:repbox}).

The semantics of a \emph{Priority AND gate} with $2$ inputs is defined by
$[\![(\PAND,2)]\!]:\actions^{6}\rightarrow\iosa$, where
$[\![(\PAND,2)]\!](\tv{f},\tv{u},\tv{f}_0,\tv{u}_0,\tv{f}_1,\tv{u}_1)$ results
in the \iosa\ of Fig.~\ref{fig:pand}. PAND gates fail only when their inputs
fail from left to right. This allows to condition the failure of a system not
only to the failure of the subsystems but also to the ordering in which they
fail. Notice that an $n$ inputs PAND gate is simply a syntax sugar for a system
of $n-1$ two-input PAND gates connected in cascade. Literature is not always
clear or even disagrees on what should be the behavior of the PAND gate in case
both inputs fail at the same time
\cite{manian1999bridging,DBLP:conf/issre/CoppitSD00}. This situation arises in
some constructions with AND and OR gates, or when the inputs of a PAND gate are
connected to the a same FDEP (see Fig. \ref{fig:nondet}). Some proposals
disallow these situations and discard them on early syntactic
checks~\cite{DBLP:conf/safecomp/RuijtersRBS17}.
\begin{wrapfigure}[]{l}{0.59\textwidth}
  \vspace{-2em}
  \begin{lstlisting}
module PAND

 f1: bool init false;
 f2: bool init false;
 st: [0..4] init 0; \\ up, inform fail, failed,
                       inform up, unbreakable

 [_?] st=0 & f1 & !f0 -> (st'=4); `\label{line:joker}`

 [`\cv{f$_0$}`??] st=0 & !f0 & !f1-> (f0'=true);
 [`\cv{f$_0$}`??] st=0 & !f0 & f1 -> (st'=1) & (f0'=true);
 [`\cv{f$_0$}`??] st!=0 & !f0     -> (f0'=true);
 [`\cv{f$_0$}`??] f0              -> ;

 [`\cv{f$_1$}`??] st=0 & !f0 & !f1         -> (f1'=true);
 [`\cv{f$_1$}`??] st=0 & f0 & !f1          -> (st'=1) & (f1'=true);
 [`\cv{f$_1$}`??] st=3 & !f1               -> (st'=2) & (f1'=true);
 [`\cv{f$_1$}`??] (st==1|st==2|st=4) & !f1 -> (f1'=true);
 [`\cv{f$_1$}`??] f1                       -> ;

 [`\cv{u$_0$}`??] st!=1 & f0 -> (f0'=false);
 [`\cv{u$_0$}`??] st=1  & f0  -> (st'=0) & (f0'=false);
 [`\cv{u$_0$}`??] !f0 -> ;

 [`\cv{u$_1$}`??] (st=0|st=3) & f1 -> (f1'=false);
 [`\cv{u$_1$}`??] (st=1|st=4) & f1 -> (st'=0) & (f1'=false);
 [`\cv{u$_1$}`??] st=2 & f1 -> (st'=3) & (f1'=false);

 [`\cv{f}`!!] st=1 -> (st'=2);
 [`\cv{u}`!!] st=3 -> (st'=0);

endmodule
  \end{lstlisting}
  \caption{\small PAND gate.}\label{fig:pand}
  \vspace{-2em}
\end{wrapfigure}
Some others assume a non-deterministic situation and find it important to
analyze scenarios where the behavior is in fact unknown
\cite{DBLP:conf/atva/BoudaliCS07}. Other works decided that the PAND gate does
not fail unless its inputs break strictly from left to right
\cite{DBLP:conf/dsn/BoudaliCS07,bobbio2004parametric}. Some others state that
PAND gates also fail when both their inputs fail at the same time
\cite{DBLP:conf/issre/CoppitSD00,DBLP:journals/tr/BoudaliD08,DBLP:journals/ress/BoudaliD05}.
We opted for this last case, so the gates needs to be able to identify if time
has passed between the occurrence of the failures, and act consequently. In the
particular case where no time passes between the failure of the inputs, we
consider that the order in which the dependent BEs fail does not really matter
and thus the non-determinism is spurious. To identify if time has passed between
the occurrence of the input failures, the model listens to any output actions,
which indicate that a clock has expired.

\begin{wrapfigure}[11]{l}{0.45\textwidth}
  \vspace{-2em}
  \center
  \noindent\includegraphics[scale=.35]{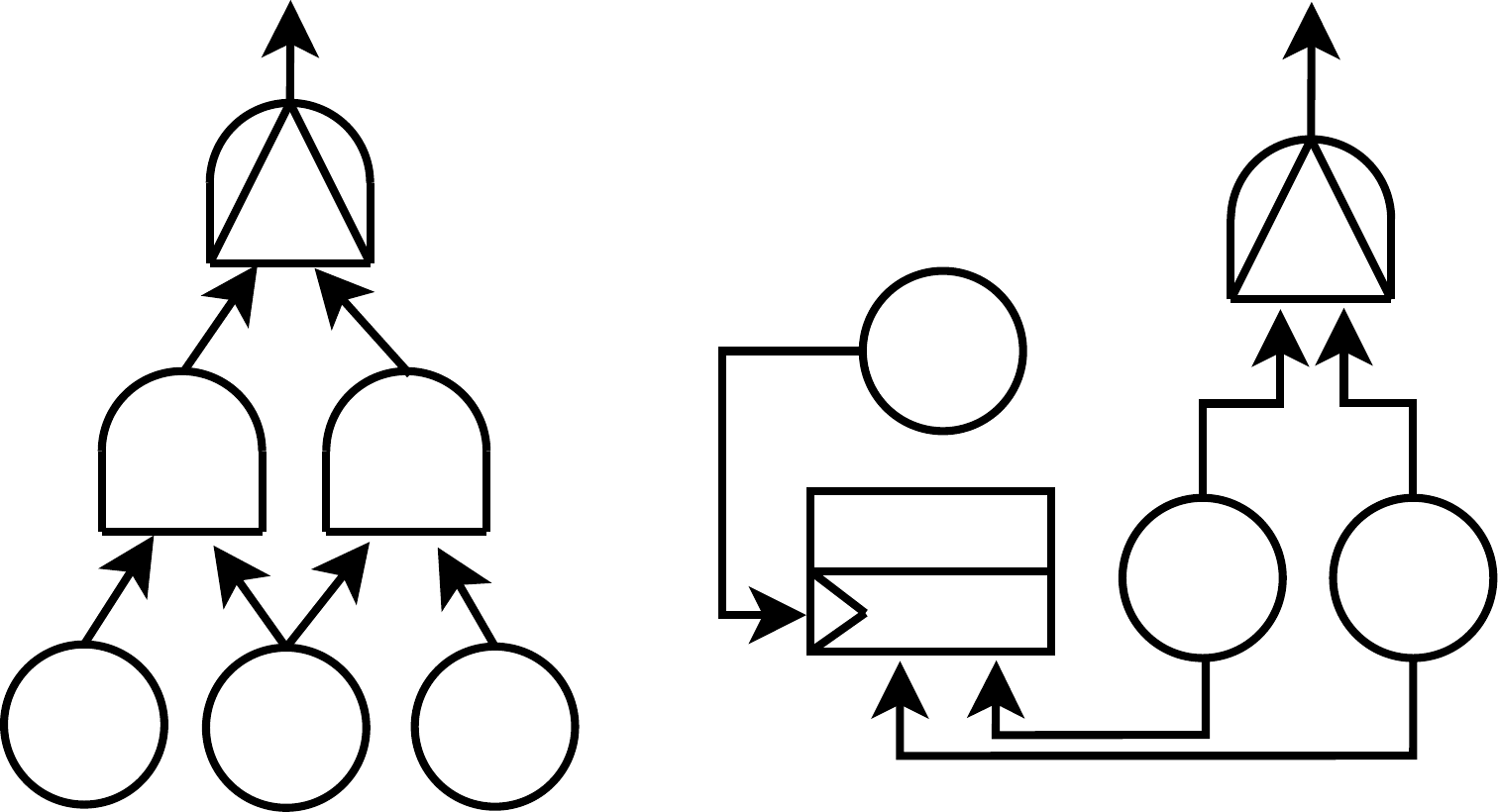}
  \caption{\small Spurious non-determinism.}
  \label{fig:nondet}
\end{wrapfigure}
This is done by a special input action at line \ref{line:joker}, which
synchronizes with all non-urgent outputs, regardless the name of the action.
Notice that there is only one scenario that we want to rule out, which is when
the second input fails and then time passes without the first input failing too.
This is in fact the case described by the guard of line \ref{line:joker}.
Furthermore, this transition moves to the `unbreakable' state, from which it can
only go back when input 1 is fixed.  In consequence, the failure of the gate
occurs either if both inputs fail at the same time or if the first input fails,
then time passes, and then the second input fails. \remarkREM{we can remove the
last sentence if space is required}

The semantics of a \rft\ is that of the parallel composition of the semantics of
its components, being conveniently synchronized.

\begin{definition}  \label{def:rftsemantic}
  Given a \rft\ $T=(\ver,\inputs,\spareinputs,\lab)$ we define the semantics of
  $T$ as $[\![T]\!] = ||_{v\in V}[\![v]\!]$ where $[\![v]\!]$ is defined by:

  \[
    [\![v]\!]=
    \begin{cases}
      [\![\lab(v)]\!](\text{\rm\tv{fl}}_v,\text{\rm\tv{up}}_v,\text{\rm\tv{f}}_v,\text{\rm\tv{u}}_v,\text{\rm\tv{r}}_v)
      & \text{if } \lab(v)=(\BE,0,\mu,\gamma)
      \\
      [\![\lab(v)]\!](\text{\rm\tv{f}}_v,\text{\rm\tv{u}}_v,\text{\rm\tv{f}}_{i(v)[0]},\text{\rm\tv{u}}_{i(v)[0]},...,\text{\rm\tv{f}}_{i(v)[n-1]},\text{\rm\tv{u}}_{i(v)[n-1]})\hspace{-17em}
      \\ &
      \text{if } \lab(v)\in\{(\AND,n),(\OR,n)\}
      \\
      [\![\lab(v)]\!](\text{\rm\tv{f}}_v,\text{\rm\tv{u}}_v,\text{\rm\tv{f}}_{i(v)[0]},\text{\rm\tv{u}}_{i(v)[0]},\text{\rm\tv{f}}_{i(v)[1]},\text{\rm\tv{u}}_{i(v)[1]})
      &
      \text{if } \lab(v)=(\PAND,2)
      \\
      [\![\lab(v)]\!](\text{\rm\tv{fl}}_{i(v)[0]},\text{\rm\tv{up}}_{i(v)[0]},\text{\rm\tv{r}}_{i(v)[0]},...,\text{\rm\tv{fl}}_{i(v)[n-1]}, \text{\rm\tv{up}}_{i(v)[n-1]},\text{\rm\tv{r}}_{i(v)[n-1]})\hspace{-17em}
      \\ &
      \text{if } \lab(v)=(\RBOX,n)
    \end{cases}
    \]
\end{definition}

In Section~\ref{sec:extendedSemantics}, we extend the semantics to spare gates
and spare basic elements.

\section{\rft{s} are weakly deterministic}\label{sec:weakDeterminism}

In this section we show that \rft{s} composed only by BEs, AND gates, OR gates,
PAND gates, and RBOX, are weakly deterministic. Since voting and FDEP gates can
be constructed using OR and AND gates, the result extends to these gates.
Results in this section rely heavily on results about weak determinism on \iosa\
proven in~\cite{DArgenioMonti18}. Therefore, we first summarize the essentials
of~\cite{DArgenioMonti18} for this paper.

\begin{definition}
\label{def:confluence}
  An \iosa\ is \emph{confluent} if for all pair of urgent actions $a$
  and $b$, and for every (reachable) state $s$, it satisfies that, if
  $s\trans[\emptyset,a,C_1]s_1$ and $s\trans[\emptyset,b,C_2]s_2$,
  then there is a state $s_3$ such that $s_2\trans[\emptyset,a,C_1]s_3$
  and $s_1\trans[\emptyset,b,C_2]s_3$.
\end{definition}

Note that, according to this definition, regardless the order of the confluent
transitions, the same state is reached. This non-determinism is spurious in the
sense that it does not alter the stochastic properties of the given \iosa,
regardless the manner it is solved. Since non-determinism can only arise on
urgent actions, we say that a \emph{closed} \iosa\ is \emph{weakly
deterministic} if all its urgent actions are confluent.
In~\cite{DArgenioMonti18}, we provided sufficient conditions to ensure that a
closed \iosa\ is weakly deterministic. This is stated in
Theorem~\ref{theo:sufCondForDet} below which requires the following definition.

\begin{definition}
  Given an \iosa\ $\I$ with state space $\states$ and actions
  $\actions$, we distinguish the following sets of actions:
  \begin{itemize}
  \item
    A set of urgent output actions $B \subseteq \outactions \cap\coactions$ is
    \emph{initial} if each $b\in B$ is enabled in $s_0$, i.e. if for
    each $b\in B$ there is a state $s\in \states$ and
    $C\subseteq\clocks$, such that $s_0\trans[\emptyset,b,C]s$.
  \item
    We say that a set $B\subseteq\outactions\cap\coactions$ of output
    urgent actions is \emph{spontaneously enabled} by
    $b\in\actions\setminus\coactions$ if there are stochastically
    reachable states $s,s'\in\states$ (a state is stochastically
    reachable if there is a path in the \iosa\ from the initial state
    that reaches such state with probability greater than zero) such
    that $s$ is stable, $s\trans[\_,b,\_]s'$, and all actions in $B$
    are enabled in $s'$.
  \item
    \label{def:trigrel}%
    Let $\act\in \coactions$ and $\actb\in \outactions\cap\coactions$.
    We say that $\act$ \emph{triggers} $\actb$ if there are
    stochastically reachable states $s_1,s_2,s_3\in\states$ such that:
    $s_1\trans[\_,a,\_]s_2$, $s_2\trans[\_,b,\_]s_3$, and, if
    $\act\neq\actb$, then there is no outgoing transition from $s_1$ labeled
    with $b$. 
    The set $\{(a,b)\mid a \text{ triggers } b\}$ is called the
    \emph{triggering relation}.
  \end{itemize}
\end{definition}

The \emph{approximate indirect triggering relation} of a composite \iosa\ is
defined as the reflexive transitive closure of the union of the triggering
relations of its components.
The following theorem from~\cite{DArgenioMonti18}, gives necessary conditions
for a closed \iosa\ not to be confluent. As a consequence, it provides
sufficient conditions for a closed \iosa\ to be \emph{weakly deterministic}.

\begin{theorem}
\label{theo:sufCondForDet}
  Given a closed composite \iosa\ $\I = \left( \I_1 \pll \ldots \pll
  \I_n \right)$ with actions $\actions$, if $\I$ is not confluent then
  there exist a pair of urgent actions $\act,\actb \in \coactions$ such that
  \begin{enumerate}
  \item%
    one of the components is not confluent with respect to $\act$ and
    $\actb$,
  \item%
    there are actions $c$ and $d$ that approximately indirectly
    trigger both $a$ and $b$, respectively, and
  \item%
    one of the following hold:
    \begin{inparaenum}[(i)]
    \item%
      $c$ and $d$ are initial actions, or
    \item%
      there exists an action $e$ and possible empty sets $B_1$ to
      $B_n$ spontaneously enabled by $e$ in $\I_1$ to $\I_n$
      respectively, such that $c$ and $d$ are in $\bigcup_{i=1}^n
      B_i$.
    \end{inparaenum}
  \end{enumerate}
\end{theorem}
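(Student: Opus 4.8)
The plan is to prove the statement in its stated (necessary-condition) form: assume $\I$ is not confluent and extract the witnessing urgent actions $a,b$, their triggering causes $c,d$, and, where needed, a common spontaneous enabler $e$. First I would unfold Definition~\ref{def:confluence}: non-confluence yields a stochastically reachable state $s$, two urgent actions $a,b\in\coactions$, and transitions $s\trans[\emptyset,a,C_1]s_1$ and $s\trans[\emptyset,b,C_2]s_2$ for which no single $s_3$ closes the diamond via $s_2\trans[\emptyset,a,C_1]s_3$ and $s_1\trans[\emptyset,b,C_2]s_3$. Note that by constraint~\ref{def:iosa:input-and-commit-are-reactive} the enabling-clock set of an urgent transition is empty, so the only data to track along the diamond are the reset sets $C_1,C_2$ and the target states; moreover, since $\I$ is closed, $\inactions=\emptyset$ and every urgent action is an urgent \emph{output}.

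For condition~1 I would establish that confluence is compositional. Projecting the composite moves through the rules of Table~\ref{tb:parcomp}, each of $a$ and $b$ is realised by the synchronised action of exactly the components sharing its name --- one generating the output and the others reacting as inputs, which by constraints~\ref{def:iosa:input-enabled} and~\ref{def:iosa:input-deterministic} are always enabled and deterministic --- while the remaining components stay idle. If every component were locally confluent with respect to $a$ and $b$, the local diamonds would close at matching local targets, and recomposing them (the reset sets combining disjointly because compatible components share no clocks) would close the composite diamond, contradicting non-confluence. Hence some component $\I_j$ is not confluent with respect to $a$ and $b$, which is condition~1.

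The heart of the argument is conditions~2 and~3, which trace back how $a$ and $b$ came to be simultaneously enabled at the reachable, necessarily unstable state $s$. Here I would exploit the defining discipline of \iosa\ runs: urgency forbids time to pass, so a stochastic path reaching $s$ ends in a maximal \emph{urgent cascade} of urgent transitions only, during which no clock reaches zero and no non-urgent output is freshly armed. Let $s^\ast$ be the last stable state on such a path, or $s_0$ if the whole suffix is an urgent cascade. If $s^\ast=s_0$, then every urgent action arising along the cascade is reached by urgent triggering from actions enabled in $s_0$, so some initial actions $c,d$ approximately indirectly trigger $a$ and $b$, giving condition~3\,(i). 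Otherwise the transition leaving $s^\ast$ carries a non-urgent output $e\in\actions\setminus\coactions$, and the urgent actions enabled in $s'$ immediately afterwards yield, per component, the spontaneously enabled sets $B_1,\dots,B_n$ of Definition~\ref{def:trigrel}; since only urgent actions fire between $s^\ast$ and $s$, both $a$ and $b$ are reached through chains of the component triggering relations, so there are $c,d\in\bigcup_{i=1}^n B_i$ approximately indirectly triggering $a$ and $b$ --- condition~2 together with condition~3\,(ii).

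I expect the main obstacle to be the faithful justification of the ``no clock expires during an urgent cascade'' step and the passage to the \emph{approximate} indirect triggering relation. The first relies on the $\activeck/\enablingck$ invariants of constraint~\ref{def:iosa:clock-never-go-off-early} to guarantee that an urgent action enabled inside the cascade was either already enabled at its start or inherited from its immediate predecessor --- genuinely triggered rather than spontaneously arising --- so that walking back from $a$ (resp.\ $b$) to the point where it first became enabled never breaks the chain. The second is a deliberate over-approximation: composite reachability cannot be tracked exactly, so the chains are closed using the reflexive transitive closure of the \emph{per-component} triggering relations over \emph{stochastically reachable local} states. I would check that a concrete composite triggering step projects onto a component triggering step (component reachability subsumes the projection of composite reachability), so the witnesses $c,d$ always exist; this only weakens the conclusion, which is exactly what is wanted for a sound, polynomially checkable sufficient condition for weak determinism. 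The remaining care lies in matching the reflexive case $c=a$ (or $d=b$) and the ``no competing $b$-transition out of $s_1$'' clause of the triggering definition against the concrete witnesses produced by the cascade.
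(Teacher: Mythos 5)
The first thing to note is that the paper does not prove this statement at all: Theorem~\ref{theo:sufCondForDet} is imported verbatim from \cite{DArgenioMonti18} and used as a black box, so there is no in-paper proof to compare yours against. Judged on its own, your sketch is a sound reconstruction of the standard argument and identifies the right skeleton: (1) confluence of a pair of urgent actions is preserved by parallel composition (local diamonds recompose because synchronising inputs are enabled and deterministic by constraints \ref{def:iosa:input-enabled} and \ref{def:iosa:input-deterministic}, and compatible components share no clocks), so a non-confluent composite must contain a non-confluent component; (2)--(3) every urgent action enabled at the offending state is reached by walking back through an uninterrupted urgent cascade to either the initial state or the last stable state, and each backward step projects onto a component triggering step, which is exactly what the reflexive transitive closure of the per-component triggering relations over-approximates. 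Your closing remarks also correctly locate the delicate points: local reachability over-approximates the projection of composite reachability, and the reflexive case of the closure absorbs the situation where $a$ or $b$ is itself initial or spontaneously enabled.

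Two small cautions. First, you attribute the ``no clock expires during an urgent cascade'' step to the $\activeck/\enablingck$ invariants of constraint \ref{def:iosa:clock-never-go-off-early}; that constraint only guarantees that enabling clocks of non-urgent outputs are not expired prematurely, whereas the fact that no time elapses (hence no non-urgent output fires) while an urgent output is enabled is the maximal-progress clause of the underlying semantics in \cite{DArgenioMonti18}, not a consequence of (f). Second, in case 3(ii) you should say explicitly why the chains for $a$ and for $b$ start from the \emph{same} non-urgent action $e$: since $a$ and $b$ are both urgent outputs enabled at $s$, every state from the point where either first became enabled up to $s$ is unstable, so both enabling events lie inside the single cascade that begins at the unique last stable state $s^\ast$, and the actions $c,d$ heading both chains are urgent outputs enabled in the component states reached by the one transition $s^\ast\trans[\_,e,\_]s'$, i.e.\ members of sets $B_i$ spontaneously enabled by that one $e$. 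With those two points made precise, the sketch matches what the cited result asserts.
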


In the following, we prove accessory propositions to eventually prove, using
Theorem~\ref{theo:sufCondForDet}, that the \iosa\ defined by a \rft\ is weakly
deterministic.

\begin{proposition}
  \label{prop:initandspont}
  Let $T$ be a \rft.  $[\![T]\!]$ has no initially enabled actions.
  Moreover, the only spontaneous sets of actions are singletons of the
  form $\{\text{\rm\code{f}}_v\}$ and $\{\text{\rm\code{u}}_v$\}, for
  $\type(\lab(v))=\BE$, which are spontaneously enabled by
  $\text{\rm\code{fl}}_v$ and $\text{\rm\code{up}}_v$, respectively.
\end{proposition}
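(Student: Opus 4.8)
The plan is to reduce both claims to a finite, module-by-module inspection, using the fact that an urgent output of the composite $[\![T]\!]$ is enabled in a global state exactly when the guard of the single module that \emph{emits} that action holds, since every synchronising partner merely \emph{receives} it and is input-enabled (Def.~\ref{def:iosa}\ref{def:iosa:input-enabled}). Thus it suffices to understand, for each module of Figs.~\ref{fig:basicelement},~\ref{fig:and},~\ref{fig:rbox} and~\ref{fig:pand} (and the analogous \OR\ module), when its own urgent outputs become enabled.

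To show that $[\![T]\!]$ has no initially enabled action I would argue that the initial composite state is the product of the modules' initial valuations, so I only need to check that no urgent-output guard holds there. In a \BE\ the outputs $\code{f}_v,\code{u}_v$ need $\code{inform}\in\{1,2\}$ but $\code{inform}=0$; in an \AND\ or \OR\ gate they need $\code{informf}$ resp.\ $\code{informu}$ set, which they are not; in a \PAND\ they need $\code{st}\in\{1,3\}$ while $\code{st}=0$; and in a \RBOX\ every $\code{r}_j$ needs some $\code{broken}[j]$, while the array is all-false. Hence no urgent output is enabled in $s_0$, which is exactly the definition of having no initial actions.

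For the spontaneous sets I would use that the witnessing state $s$ must be \emph{stable}, so \emph{no} urgent output is enabled in $s$; a non-urgent action $b$ can therefore witness a spontaneous set only by driving some module from a state that disables an urgent output to one that enables it. Enumerating the non-urgent actions of the model—the \BE\ outputs $\code{fl}_v,\code{up}_v$ and the \PAND\ joker input of line~\ref{line:joker}—I would then compute their effect. In the \BE, $\code{fl}_v$ sets $\code{broken}=1$ and $\code{inform}=1$, enabling exactly $\code{f}_v$, while $\code{up}_v$ sets $\code{broken}=0$ and $\code{inform}=2$, enabling exactly $\code{u}_v$; as $\code{inform}$ cannot be $1$ and $2$ at once, these are the singletons $\{\code{f}_v\}$ and $\{\code{u}_v\}$. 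For the gates, the urgent-output guards depend only on internal flags updated solely along \emph{urgent input} transitions, and the sole non-urgent gate action—the \PAND\ joker—lands in $\code{st}=4$, enabling neither $\code{f}$ nor $\code{u}$; so gates contribute no spontaneous set. The derived \VOT\ and \FDEP\ constructions of Section~\ref{sec:rftsemantics} reduce to this \AND/\OR\ analysis.

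I expect the \RBOX\ to be the main obstacle, because its non-urgent inputs $\code{fl}_i,\code{up}_i$ do touch the guards of the urgent repair signals $\code{r}_j$. The key auxiliary fact I would prove first is a synchronisation invariant of the composite: on every reachable run $\code{broken}[i]$ mirrors the failure status of the \BE\ feeding input $i$, and in every stable state $\code{busy}=\mathit{false}$ forces the whole array to be all-false (otherwise the priority-minimal $\code{r}_j$ would be enabled, contradicting stability). Armed with this invariant I would examine, from a stable state, whether firing $\code{fl}_i$ or $\code{up}_i$ can freshly enable some $\code{r}_j$, splitting on the idle ($\code{busy}=\mathit{false}$) and busy ($\code{busy}=\mathit{true}$) cases; this is the only place where reachability and stability genuinely interact, and hence the delicate heart of the proof, where one must pin down precisely which repair-initiation singletons arise and verify that each is triggered by a \emph{single} non-urgent action—the property on which the later weak-determinism argument via Theorem~\ref{theo:sufCondForDet} ultimately rests.
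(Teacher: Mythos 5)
Your treatment of the first claim (no initially enabled actions) coincides with the paper's: both reduce to checking that no urgent-output guard holds under the initial valuation of any module. For the second claim, however, your proposal stops exactly where the statement is at stake, and the step you defer does not go through in the direction you need. From a stable, reachable state of the composite in which a \RBOX\ is idle ($\code{busy}$ false and all $\code{broken[j]}$ false --- e.g.\ the initial state), firing $\code{fl}_v$ for its $i$-th input sets $\code{broken[i]}$ to true while $\code{busy}$ stays false, so the guard of $\code{r}_v$ (line 9 of Fig.~\ref{fig:rbox}) becomes satisfied; hence $\{\code{r}_v\}$ --- and in the composite even $\{\code{f}_v,\code{r}_v\}$ --- is spontaneously enabled by $\code{fl}_v$. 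Similarly $\code{up}_v$ resets $\code{busy}$ and can freshly enable $\code{r}_w$ for another queued input $w$. The synchronisation invariant you propose (stability plus $\code{busy}$ false forces the array to be all-false) is true but only confirms this: the idle case is precisely the one that produces the extra spontaneous singletons. So the case split you announce cannot end with ``no repair-initiation singletons arise'', which is what the word ``only'' in the proposition requires; as written, your argument does not establish the statement.

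For comparison, the paper's own proof never confronts this case: it restricts attention to non-urgent \emph{output} transitions and inspects only the emitting \BE\ instance, silently passing over the fact that $\code{fl}_v$ and $\code{up}_v$ are also non-urgent inputs of a \RBOX\ whose urgent-output guards they modify. Your decomposition --- enumerating every non-urgent action together with every module whose urgent guards it can touch, including the \PAND\ joker --- is the more faithful reading of the definition of spontaneous enabling, and it is exactly what exposes the problem. To close your proof you would have to either justify restricting to the emitting component, or admit the additional spontaneous sets $\{\code{r}_v\}$ and then observe that they are harmless downstream: by Prop.~\ref{prop:triggering} the \RBOX\ has an empty triggering relation and $\code{r}_v$ triggers nothing in the \BE, and by Prop.~\ref{prop:nonconfluent} $\code{r}_v$ belongs to no non-confluent pair, so the candidate actions $c,d$ of Theorem~\ref{theo:sufCondForDet} still cannot be completed to a violation of confluence. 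That last observation is the missing idea; without it the proposal is incomplete.
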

\begin{proof}
  As a consequence of \cite{DArgenioMonti18}, the initially enabled
  actions of $[\![T]\!]$ are contained in the union of the sets of
  initially enabled actions of its components $[\![v]\!]$, $v\in V$,
  and the spontaneously enabled actions of $[\![T]\!]$ are contained
  in the union of the spontaneously enabled sets of $[\![v]\!]$.  It
  is direct to see that, for any element $e\in\E$, none of the urgent
  outputs are enabled at the initial state of $[\![e]\!]$, since their
  guards are initially false.  Furthermore, the only non-urgent output
  transition in our models are at lines 6 and 8 of the BE
  (Fig.~\ref{fig:basicelement}). Let $v\in V$ such that
  $\type(v)=\BE$. Then, after taking transition at line 6 the only
  urgent output enabled is $\code{f}_v$ (on the instance $[\![v]\!]$),
  while after taking transition at line 8 the only one is
  $\code{u}_v$, and thus these are the only possible spontaneous
  enabled actions. \qed
\end{proof}

\begin{proposition}
\label{prop:nonconfluent}
  Let $T$ be a \rft. The only possible pairs of non-confluent actions
  in $[\![T]\!]$ are
  $\{(\text{\rm\tv{f}}_v,\text{\rm\tv{u}}_{v'})\mid v,v' \in i(w), t(w)\in\{\AND, \OR, \PAND\}\} \cup
  \{(\text{\rm\tv{f}}_{w},\text{\rm\tv{u}}_{v}),(\text{\rm\tv{u}}_{w},\text{\rm\tv{f}}_{v})\mid v \in i(w), t(w)\in\{\AND, \OR\}\}$.
\end{proposition}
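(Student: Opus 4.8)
The plan is to reduce the global question to a purely local one and then run a finite diamond‑check on each kind of module. By clause~(1) of Theorem~\ref{theo:sufCondForDet}, if the closed composite $[\![T]\!]$ fails to be confluent on a pair of urgent actions $(\act,\actb)$, then already one of its components $[\![v]\!]$ is non‑confluent on $(\act,\actb)$. Hence it suffices to determine, module by module, the pairs of urgent actions on which the semantics of a single \rft\ element can break the diamond property of Definition~\ref{def:confluence}, and then to rename those pairs to global action names through Definition~\ref{def:rftsemantic}. Since every urgent action of $[\![T]\!]$ is a synchronisation of the signals $\tv{f}_v,\tv{u}_v$ or $\tv{r}_v$, the candidate pairs are automatically of the shapes occurring in the statement; the work is to decide, for each module, which of them actually survive.

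First I would dispatch the elements that contribute nothing. For a \BE\ the urgent actions are $\tv{f}_v,\tv{u}_v,\tv{r}_v$ (Fig.~\ref{fig:basicelement}): the pair $(\tv{f}_v,\tv{u}_v)$ is never co‑enabled, because its two transitions need the mutually exclusive guards $\code{inform}{=}1$ and $\code{inform}{=}2$; and for the two pairs involving $\tv{r}_v$ both orders close the diamond, since $\tv{r}_v$ only touches \code{broken} and \code{rc}, which the competing transition leaves untouched. For a \RBOX\ the only urgent actions are the outputs $\tv{r}_v$ (Fig.~\ref{fig:rbox}), and the priority guards make any two distinct ones mutually exclusive: enabling $\tv{r}_{v}$ for a lower‑priority input requires the \code{broken} flags of all higher‑priority inputs to be false, contradicting the guard that enables $\tv{r}_{v'}$ for any of those. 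Thus \BE\ and \RBOX\ are confluent on every pair and generate no candidates.

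The substance is the case analysis on the gate modules. For an \AND\ gate (and symmetrically \OR, whose module is in Appendix~\ref{app:orGate}) the state is captured by $\code{count}\in\{0,1,2\}$ together with \code{informf},\code{informu}; pushing the finitely many co‑enabled urgent pairs through Definition~\ref{def:confluence} isolates exactly the input pairs $(\tv{f}_v,\tv{u}_{v'})$ and the gate‑output pairs $(\tv{f}_w,\tv{u}_v),(\tv{u}_w,\tv{f}_v)$, because firing a gate output consumes the very flag that enabled it while the competing input action moves \code{count} so that the output can no longer fire in the opposite order. For the \PAND\ gate (Fig.~\ref{fig:pand}) I would repeat this over the richer control variable $\code{st}\in\{0,\dots,4\}$ and the flags \code{f0},\code{f1}, noting first that the joker transition is non‑urgent and hence irrelevant to confluence. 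Collecting the pairs found for \AND,\OR\ and \PAND\ and renaming through Definition~\ref{def:rftsemantic} is then intended to yield precisely the two displayed sets.

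The main obstacle is the \PAND\ case and, with it, the asymmetry in the statement whereby \PAND\ contributes only the first set $\{(\tv{f}_v,\tv{u}_{v'})\}$ while \AND,\OR\ contribute both. A naive isolated‑module diamond‑check over \code{st} flags several gate‑output pairs (e.g.\ at the reachable state $\code{st}{=}1,\code{f0}{=}\code{f1}{=}\mathit{true}$ the pair $(\tv{f}_w,\tv{u}_{v})$ does not close), so clause~(1) alone over‑approximates the answer: module reachability is coarser than reachability in the composite and can only enlarge the candidate set. To cut these spurious \PAND\ candidates one must invoke the triggering/spontaneous structure of Proposition~\ref{prop:initandspont} together with clauses~(2)–(3) of Theorem~\ref{theo:sufCondForDet}, arguing that no single spontaneously enabling event approximately indirectly triggers both a gate \emph{failure} output and an input \emph{repair} signal. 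Getting this pruning exactly right — so that it removes the offending \PAND\ pairs while leaving the \AND/\OR\ gate‑output pairs in place — is the delicate step on which the precise form of the statement rests; the rest is a bookkeeping exercise over the transition tables.
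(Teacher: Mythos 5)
Your overall strategy is the paper's own: use the fact that parallel composition cannot introduce new non-confluent pairs to reduce the claim to the individual modules, then run an exhaustive diamond check (Definition~\ref{def:confluence}) over the symbolic transitions of each element. This is exactly what the paper does in Appendix~\ref{appdx:proofOfNonConfluent}, except that it only works out the \AND\ gate explicitly and declares the remaining elements analogous. Your dispatch of \BE\ (mutual exclusion of \code{inform=1} and \code{inform=2}; commutation of \text{\tv{f}}/\text{\tv{u}} with \text{\tv{r}}, which touch disjoint variables) and of \RBOX\ (mutually exclusive priority guards) matches the intended argument, as does the \AND/\OR\ case analysis over \code{count}, \code{informf}, \code{informu}.

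Your worry about the \PAND\ case is well-founded, and it points at a gap in the paper's proof rather than only in yours: at the module-reachable state $\code{st}=1$, $\code{f0}=\code{f1}=\mathit{true}$ of Fig.~\ref{fig:pand}, the pair $(\text{\tv{f}}_w,\text{\tv{u}}_{i(w)[1]})$ fails the diamond (firing \text{\tv{f}} and then \text{\tv{u}}$_1$ ends in $\code{st}=3$, while firing \text{\tv{u}}$_1$ first moves to $\code{st}=0$ and disables \text{\tv{f}}), and symmetrically $(\text{\tv{u}}_w,\text{\tv{f}}_{i(w)[1]})$ fails at $\code{st}=3$, $\code{f1}=\mathit{false}$. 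A purely component-level check therefore cannot exclude the \PAND\ output pairs from the stated set, so ``the proof follows similarly'' does not hold for \PAND. However, the repair you propose --- pruning these pairs via Proposition~\ref{prop:initandspont} and clauses (2)--(3) of Theorem~\ref{theo:sufCondForDet} --- proves the wrong statement: those clauses govern whether component-level non-confluence lifts to the composite and belong to the proof of the final theorem, whereas Proposition~\ref{prop:nonconfluent} is precisely the inventory of pairs satisfying clause (1). The two honest fixes are (i) enlarge the stated set with the \PAND\ output pairs, which is harmless downstream because the final theorem only uses that every non-confluent pair couples a fail action with an up action, or (ii) argue unreachability of the offending states in the closed composite (a child cannot have \text{\tv{u}}$_1$ enabled, i.e.\ have just been repaired, at the very instant the \PAND\ has just registered that child's failure). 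Your identification of where the difficulty lies is correct; the mechanism you propose for resolving it is not the one that proves this proposition as stated.
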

\begin{proof}
  The proof of this Proposition follows an exhaustive check over each urgent
  transition of each model, in order to single out any non-confluent situation,
  and can be found at Appendix \ref{appdx:proofOfNonConfluent}
\end{proof}

\begin{proposition}
  \label{prop:triggering}
  Let $T$ be a \rft. For each $v\in V$, the triggering relation of
  $[\![v]\!]$ is given by:
  \begin{itemize}
    \item $\{\}$, if $\lab(v)\in\{(\BE,0,\mu,\gamma),(\RBOX,n)\}$,
    \item $\{(\text{\rm\tv{f}}_w,\text{\rm\tv{f}}_v)\mid w\in i(v)\} \cup
    \{(\text{\rm\tv{u}}_w,\text{\rm\tv{u}}_v)\mid w\in i(v)\}$, if $\lab(v)\in\{(\AND,n),(\OR,n)\}$, and
    \item $\{(\text{\rm\tv{u}}_w,\text{\rm\tv{u}}_v)\mid w = i(v)[1]\} \cup
          \{(\text{\rm\tv{f}}_w,\text{\rm\tv{f}}_v)\mid w\in i(v)\}$, if $\lab(v)=(\PAND,2)$.
  \end{itemize}
\end{proposition}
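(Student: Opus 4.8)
The plan is to work directly from the definition of the triggering relation (Def.~\ref{def:trigrel}): a pair $(a,b)$ lies in the triggering relation of $[\![v]\!]$ precisely when $a\in\coactions$, $b\in\outactions\cap\coactions$, and there are stochastically reachable states with $s_1\trans[\_,a,\_]s_2$, $s_2\trans[\_,b,\_]s_3$ and, whenever $a\neq b$, no $b$-labelled transition leaving $s_1$. Two reductions shrink the search drastically. First, $b$ must be an urgent \emph{output}, so it ranges only over $\{\tv{f}_v,\tv{u}_v\}$ for the BE, AND/OR and PAND modules, and over $\{\tv{r}_0,\dots,\tv{r}_{n-1}\}$ for the RBOX. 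Second, if the $a$-step were a self-loop ($s_1=s_2$) then firing $b$ would exhibit a $b$-transition out of $s_1$, contradicting the last clause; hence the input-enabledness self-loops never contribute, and---since every urgent output in our models resets its own guard upon firing---neither does the case $a=b$. It therefore suffices, for each urgent output $b$ with guard $g_b$, to collect the \emph{state-changing} urgent transitions that flip $g_b$ from false to true, and finally to rename module ports to global actions via Def.~\ref{def:rftsemantic}.

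For the BE and the RBOX this immediately yields the empty relation. In the BE the guards of $\tv{f}_v$ and $\tv{u}_v$ are $\code{inform}=1$ and $\code{inform}=2$, yet \code{inform} is assigned a non-zero value only on the \emph{non-urgent} outputs $\tv{fl}_v$ and $\tv{up}_v$; no urgent action enables either output, so there is no candidate $a$. In the RBOX every guard of an $\tv{r}_j$ contains the conjunct $\neg\code{busy}$, the flags \code{broken}$[j]$ are raised only by the non-urgent inputs $\tv{fl}_j$, and firing any urgent $\tv{r}_i$ sets \code{busy}, which \emph{disables} all the $\tv{r}_j$; hence no urgent action can turn an $\tv{r}_j$-guard true.

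For the AND---and symmetrically the OR---the urgent output $\tv{f}_v$ is guarded by \code{informf} together with \code{count} having reached the gate's arity, and this guard is established only by the urgent failure input that raises \code{count} to that value while setting \code{informf}; as the last input to fail may be any of them, this contributes exactly the pairs $(\tv{f}_w,\tv{f}_v)$ with $w\in i(v)$. Symmetrically, $\tv{u}_v$ (guarded by \code{informu} and \code{count} below the arity) is enabled precisely by the repair inputs $\tv{u}_w$, $w\in i(v)$. Cross pairs are ruled out because a failure input leaves \code{informu} untouched and a repair input leaves \code{informf} untouched, while each output resets its own flag, so it cannot re-enable itself. Reachability of the witnessing pre-states (e.g.\ \code{count} equal to the arity minus one after the appropriate prior failure inputs) is immediate from the initial valuation, and renaming ports via Def.~\ref{def:rftsemantic} gives exactly $\{(\tv{f}_w,\tv{f}_v)\mid w\in i(v)\}\cup\{(\tv{u}_w,\tv{u}_v)\mid w\in i(v)\}$.

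The PAND is the delicate case and the main obstacle, since its control variable \code{st} ranges over five values and is written by many transitions. Here I would argue by inspecting every assignment to \code{st}: the guard $\code{st}=1$ of $\tv{f}_v$ is made true only by the two failure transitions $\code{st}=0\to\code{st}=1$ (fired by $\tv{f}_{i(v)[0]}$ and $\tv{f}_{i(v)[1]}$), whereas the guard $\code{st}=3$ of $\tv{u}_v$ is reached by the \emph{single} transition $\code{st}=2\to\code{st}=3$, fired only by the repair $\tv{u}_{i(v)[1]}$ of the second input. One must additionally confirm that no other transition sets \code{st} to $1$ or $3$---in particular the non-urgent ``joker'' input is not in $\coactions$ and so is never an admissible $a$---that the two witnessing prefixes are stochastically reachable (reach $\code{st}=1$ by failing both inputs in the two possible orders; reach $\code{st}=3$ by first firing $\tv{f}_v$ to enter $\code{st}=2$ and then repairing the second input), and that neither output self-re-enables. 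This asymmetry, both inputs triggering $\tv{f}_v$ but only the second input triggering $\tv{u}_v$, is exactly the claimed relation $\{(\tv{u}_w,\tv{u}_v)\mid w=i(v)[1]\}\cup\{(\tv{f}_w,\tv{f}_v)\mid w\in i(v)\}$.
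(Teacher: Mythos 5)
Your proposal is correct and takes essentially the same route as the paper, whose proof is only a one-line sketch (``a satisfiability analysis over guards and postconditions of each pair $(t_a,t_b)$ with $t_b$ an urgent output symbolic transition and $t_a$ any urgent symbolic transition, taking into account only reachable states''); you carry out exactly that analysis, adding the useful preliminary observations that input-enabledness self-loops and the case $a=b$ can never contribute. The only loose spot is the dismissal of the AND/OR cross pairs: ruling out, e.g., $(\tv{u}_w,\tv{f}_v)$ requires not just that $\tv{u}_w$ leaves \code{informf} untouched but also that it cannot make the \code{count} conjunct of $\tv{f}_v$'s guard become true --- an immediate check, since repair inputs only decrease \code{count}.
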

\begin{proof}[sketch]
  It sufficies to make a satisfiability analysis over guards and
  postconditions of each pair $(t_a,t_b)$ with $t_b$ an output urgent
  symbolic transition and $t_a$ any urgent symbolic transition, taking
  into account only reachable states. \qed
\end{proof}

\begin{theorem}
  Let $T$ be a \rft.  Then $[\![T]\!]$ is weakly deterministic.
\end{theorem}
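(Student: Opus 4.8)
The plan is to prove confluence of $[\![T]\!]$ by contradiction, using Theorem~\ref{theo:sufCondForDet} as the only external machinery and feeding it the three preceding propositions. Since $[\![T]\!]$ is closed, weak determinism coincides with confluence (as recalled above), so it suffices to refute non-confluence. Assuming $[\![T]\!]$ is not confluent, Theorem~\ref{theo:sufCondForDet} would supply urgent actions $a,b\in\coactions$ together with actions $c,d$ such that (1)~some component is non-confluent for $a$ and $b$, (2)~$c$ approximately indirectly triggers $a$ and $d$ approximately indirectly triggers $b$, and (3)~either $c,d$ are initial, or $c,d\in\bigcup_i B_i$ for sets $B_i$ spontaneously enabled by a single action $e$ in each component. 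My whole argument is to show this cannot happen.

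First I would invoke Proposition~\ref{prop:nonconfluent} to pin down the candidate pair: every admissible $(a,b)$ consists of exactly one failure signal $\tv{f}_\cdot$ and one repair signal $\tv{u}_\cdot$, so $a$ and $b$ always carry opposite ``signal types.'' Next I would compute the approximate indirect triggering relation as the reflexive–transitive closure of the union of the component relations given by Proposition~\ref{prop:triggering}. The decisive observation is that this union contains only pairs of the forms $(\tv{f}_w,\tv{f}_v)$ and $(\tv{u}_w,\tv{u}_v)$; hence its closure is type-preserving, linking failure signals only to failure signals and repair signals only to repair signals. Consequently $c$ must share $a$'s type and $d$ must share $b$'s type, so $c$ and $d$ again carry opposite signal types.

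Then I would discharge condition (3) using Proposition~\ref{prop:initandspont}. Branch~(i) is immediate, since $[\![T]\!]$ has no initially enabled actions. For branch~(ii), a usable triggering action must itself be a failure or repair signal, because by Proposition~\ref{prop:triggering} an $\tv{r}_\cdot$ action triggers nothing and the PAND joker enables no urgent output; hence the non-urgent enabling action $e$ can only be some $\tv{fl}_w$ or some $\tv{up}_w$. By Proposition~\ref{prop:initandspont} the only fail/up action spontaneously enabled by $\tv{fl}_w$ is $\tv{f}_w$, and the only one enabled by $\tv{up}_w$ is $\tv{u}_w$; any remaining sets $B_i$ enabled by the same $e$ can only contribute $\tv{r}_\cdot$ actions, which cannot serve as $c$ or $d$. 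Thus the fail/up members of $\bigcup_i B_i$ all share a single signal type, forcing $c$ and $d$ to coincide in type.

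This is the contradiction: step two forces $c$ and $d$ to have opposite signal types, while step three forces them to have the same type. Hence no non-confluent pair can satisfy Theorem~\ref{theo:sufCondForDet}, so $[\![T]\!]$ is confluent and, being closed, weakly deterministic. The hard part will be the bookkeeping behind the type-preservation claim of step two: I must be sure that the triggering closure never crosses between failure and repair signals even along chains threading through many gates, since a single mixed pair would collapse the whole separation argument. The RBOX $\tv{r}_\cdot$ actions are the other delicate point, and I would neutralise them by noting they are triggering-inert, so they are irrelevant as candidates for $c$ or $d$ regardless of which spontaneous sets contain them.
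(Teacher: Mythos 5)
Your proposal is correct and follows essentially the same route as the paper: both invoke Theorem~\ref{theo:sufCondForDet} and derive a contradiction from Propositions~\ref{prop:initandspont}, \ref{prop:nonconfluent} and \ref{prop:triggering} by observing that $a,b$ must have opposite fail/up types while the type-preserving triggering relation and the singleton spontaneous sets force $c$ and $d$ to share a type. The paper merely phrases the final clash slightly differently (concluding $c=d$ outright and noting a single action cannot trigger both a fail and an up action), which is the same argument in a different order.
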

\begin{proof}
  We look for $a,b,c,d$ and $e$ as well as sets $B_i$ with $i=1\dots
  n$ as Theorem~\ref{theo:sufCondForDet} suggests. Since
  Prop.~\ref{prop:initandspont} ensures that there are no initially
  enabled actions in $[\![T]\!]$, $c$ and $d$ should be spontaneously
  enabled actions. By the same proposition either $e$ is of the form
  $\code{fl}_v$ for some $v$ and then
  $\bigcup_{i=1}^1B_i=B_1=\{\code{f}_v\}$,
  or $e$ is of the form $\code{up}_v$ for some $v$ and then
  $\bigcup_{i=1}^1B_i=B_1=\{\code{u}_v\}$.
  In the first case, we get $c=d=\code{f}_v$ for some $v$, and in the
  second case $c=d=\code{u}_v$. Furthermore, by
  Prop.~\ref{prop:nonconfluent}, either $a$ is of the form
  $\tv{f}_w$ for some $w$ and $b$ is of the form $\tv{u}_{w'}$ for
  some $w'$ or the other way around. As shown by
  Prop.~\ref{prop:triggering}, fail actions ($\tv{f}_v$ for some
  $v$) only trigger fail actions and up actions ($\tv{u}_v$ for some
  $v$) only trigger up actions, thus it is impossible that $c$ and $d$
  indirectly trigger $a$ and $b$ respectively.
  Therefore, it is not possible to find actions $a$, $b$, $c$, $d$,
  and $e$ satisfying conditions 1 to 3 in
  Theorem~\ref{theo:sufCondForDet}, and hence $[\![T]\!]$ is
  confluent. Since $[\![T]\!]$ is also closed, then it is weakly
  deterministic.
  \qed
\end{proof}

\section{An extended Semantics}\label{sec:extendedSemantics}

In this section we add the spare gate and spare basic element to the semantics
of \rft{s}. As before, we aim to guarantee that the \iosa\ model derived from
the \rft\ is weakly deterministic. In order to do so, we need to bring special
attention to two particular scenarios that could introduce non-determinism if
not correctly tackled.

The first scenario is given when a main basic element fails at a spare gate
which is served with several spare basic elements. At this point, it arises the
question of which of the available spare basic elements should the spare gate
take. Traditionally, spare elements are selected in order from an ordered set.
To generalize this mechanism for the selection of the spares we intend to allow
for more complex state-involved policies. It should be always the case that
this policy is deterministic in its elections.
The second scenario arises when several spare gates have requested a broken or
already taken SBE, which eventually gets fixed by a repair box or freed by the
owning spare gate. At this point, it is unclear which of the requesting spare
gates will take the newly available SBE. For this, we define sharing policies on
the SBE. Thus, to provide semantics to an SBE, we actually introduce two \iosa\
modules: one providing the extended behavior of a BE that can be taken from
dormant to enabled state and vice versa, and another one, the \emph{multiplexer}
module, which manages the sharing of the SBE. Notice that this scenario is not a
problem in the absence of repair boxes, since in such cases SBEs do not become
available after they are taken or fail. It is neither a problem when spare
elements are not shared by different spare
gates~\cite{bobbio2004parametric,DBLP:journals/tse/BobbioFGP03}. The
work~\cite{DBLP:conf/dsn/JungesGKS16} also studies race conditions in spare
gates when two spare gates fail at the same time. This last situation is
impossible in our settings given the last two properties of Definition
\ref{def:rft} and the fact that two simultaneous failures of our basic elements
is discarded by the IOSA deterministic semantics.

The models for the spare gate, the spare basic element and the multiplexer can
be found in Appendix \ref{app:spare}. We extend the semantics of the \rft\ with
the SBE and SG elements as follows.

\begin{definition}
  \label{def:rftextsemantic}
  Given a \rft\ $T=(V,E)$, we extend Definition \ref{def:rftsemantic}
  with the following cases:
  \[
    [\![v]\!]=
    \begin{cases}
      \cdots\\
      [\![ \lab(v) ]\!](\text{\rm\tv{fl}}_v,\text{\rm\tv{up}}_v,\text{\rm\tv{f}}_v,\text{\rm\tv{u}}_v,\text{\rm\tv{r}}_v,\text{\rm\tv{e}}_v,\text{\rm\tv{d}}_v,\text{\rm\tv{rq}}_{(\spareinputs(v)[0],v)},\text{\rm\tv{asg}}_{(v,\spareinputs(v)[0])},\hspace{-10em}\\
      \phantom{[\![ \lab(v) ]\!](}
      \text{\rm\tv{rel}}_{(\spareinputs(v)[0],v)},\text{\rm\tv{acc}}_{(\spareinputs(v)[0],v)},\text{\rm\tv{rj}}_{(v,\spareinputs(v)[0])},..,\text{\rm\tv{rj}}_{(v,\spareinputs(v)[n-1])})\hspace{-10em}\\
      &\text{if } \lab(v)=(\SP,n,\mu,\nu,\gamma)\\
      [\![\lab(v)]\!](\text{\rm\tv{f}}_v,\text{\rm\tv{u}}_v,\text{\rm\tv{fl}}_{i(v)[0]},\text{\rm\tv{up}}_{i(v)[0]},\text{\rm\tv{fl}}_{i(v)[1]},\text{\rm\tv{up}}_{i(v)[1]},\text{\rm\tv{rq}}_{(v,i(v)[1])},\text{\rm\tv{asg}}_{(i(v)[1],v)},\hspace{-10em}\\
      \phantom{[\![ \lab(v) ]\!](}
      \text{\rm\tv{acc}}_{(v,i(v)[1])},\text{\rm\tv{rj}}_{(i(v)[1],v)},\text{\rm\tv{rel}}_{(v,i(v)[1])},...,\text{\rm\tv{rel}}_{(v,i(v)[n-1])})\hspace{-10em}\\
      &\text{if } \lab(v)=(\SG,n)
    \end{cases}
  \]
\end{definition}

Notice that in the case of the SBE and SG, several signals are indexed by a pair
of elements. This pair indicates which gate performs the action and which one
listens for synchronisation.  As an example, $\tv{asg}_{(v,\spareinputs(v)[0])}$
indicates that the multiplexer that manages $v$, assigns its spare basic element
to its first connected spare gate ($\spareinputs(v)[0]$).

Unfortunately, we could not find an easy or direct way to prove that this
extension is indeed weakly deterministic, as we did with the RFT without spares.
This is due in part to the complexity of the \iosa\ modules, intended to avoid
the aforementioned non-deterministic situations.
While the spare basic element module can be easily proved to be confluent, this
is not the case for the modules of the multiplexer and the spare gate. When
analyzing these modules in isolation we find that some transitions are not
confluent and Theorem~\ref{theo:sufCondForDet} could not be used directly.
However, by partially composing spare gates with multiplexers, we were able to
check that conditions of Theorem~\ref{theo:sufCondForDet} are not met. We
automatically perform this check in several configurations, and showed that they
are confluent.  As parallel composition preserves confluence, they can be
inserted in other \rft\ contexts yielding weakly deterministic \iosa{s}.
\footnote{For the reviewers eyes, only: the scripts that prove said
  configurations are available at
  \url{https://git.cs.famaf.unc.edu.ar/raulmonti/DeterminismScriptsRFT}.}

\section{Conclusion}\label{sec:conclusion}

%
In this work we have defined a semantics for Dynamic Fault Trees with repair box in
terms of Input/Output Stochastic Automata, introducing the novel feature of
general probability measures for failure and repair rates of basic elements.
Furthermore we have shown that our semantics produces weakly deterministic models
which are hence amenable for discrete event simulation. In particular, our
models serve as direct input to the FIG Simulator
(\url{http://dsg.famaf.unc.edu.ar/fig})~\cite{budde2016compositional,thesis/unc/Budde2017},
as well as other tools through the intermediate language
Jani~\cite{DBLP:conf/tacas/BuddeDHHJT17}.
%
A future work direction could be introducing maintenance mechanism and levels of
degradation as in \cite{ruijters2016fault}, in order to increase the
possibilities for defining repair models. Another line of work would be defining
an automatic translation from a graphical modelling tool for fault trees into the
\iosa\ models, in order to automate and ease the modelling and analysis of
industrial size systems. Adding support for spare sub-trees such as in \cite{}
would be an interesting upgrade too, also along with support for sub-tree 
dedicated repair boxes. \remarkREM{This last two things were a complaint from one
of the reviewers at QEST. I am not sure that this would mean an improvement
to this work nor an interesting thing to have. It looks like a simple 
technicality and maybe (s)he used it just to make the review look interesting.}



\newpage
\appendix


\section{\iosa\ Symbolic Language}
\label{appx:iosasymlang}
The following context free grammar defines the complete \iosa\ symbolic  modelling
language. Here * stands for as  many times as you want, + for at least one
time, ? for optional, $|$ for option, and parentheses group productions and
elements.

\begin{figure}[H]
  \begin{align*}
    \text{MODEL}      &= (\text{MODULE})+\\
    \text{MODULE}     &= (\text{VARIABLE}~|~\text{ARRAY}~|~\text{CLOCK})+
                         ~~\text{TRANSITION}+\\
    \text{VARIABLE}   &= \text{NAME} ~\code{:}~ \text{TYPE} ~\code{init}~
                         \text{VALUE} ~\code{;}\\
    \text{ARRAY}      &= \text{NAME}\code{[}\text{INT}\code{]:}~
                         \text{TYPE}~\code{init}~\text{VALUE}~\code{;}\\
    \text{CLOCK}      &= \text{NAME} ~\code{: clock ;}\\
    \text{TRANSITION} &= \code{[}~(\text{NAME}~(\code{?}|\code{??}|\code{!}|
                         \code{!!})?)?~\code{]}~\text{PRE}~(\code{@}~
                         \text{NAME})?~\code{$\trans$}~\text{POS}~\code{;}\\
    \text{PRE}        &= ((\text{NAME}=\text{EXPR}) (\code{\&}~
                         \text{NAME}=\text{EXPR})*)? \\
    \text{POS}        &= (\code{(}~\text{NAME}\code{'}=\text{EXPR}~\code{)}
                         (\code{\&}~ \code{(}~\text{NAME}\code{'}=
                         \text{EXPR}~\code{)})*)?\\
    \text{EXPR}       &= \text{VALUE}~|~\text{NAME}~|~\text{EXPR}~\text{OP}~
                         \text{EXPR}~|~\code{(}~\text{EXPR}~\code{)}~|~
                         !~\text{EXPR}\\
    \text{OP}         &= \code{|}~|~\code{\&}~|~\code{+}~|~\code{-}~|~
                         \code{*}~|~\code{/}~|~\code{=}\\
    \text{NAME}       &= (\code{a}|\code{b}|...|\code{z}|\code{A}|
                         \code{B}|...|\code{Z})(\code{a}|\code{b}|...
                         |\code{z}|\code{A}|\code{B}|...|\code{Z}|
                         \code{1}|...|\code{9}|\code{\_}|\code{-})*\\
    \text{TYPE}       &= \code{boolean}~ | ~\code{[} ~\text{INT}~ \code{..}
                         ~\text{INT}~ \code{]}\\
    \text{VALUE}      &= \code{true}~ | ~\code{false}~ | ~\text{INT}\\
    \text{INT}        &= (\code{1}|\code{2}|...|\code{9})(\code{0}|
                         \code{1}|...|\code{9})*
  \end{align*}
\caption{\iosa\ symbolic language grammar.}
\label{def:iosasymbolicgrammar}
\end{figure}

An \iosa\ model is composed by a set of modules, each one describing a
concurrent component of the system to model. The body of a module can be clearly
divided into three parts: the variables declarations, the clocks declarations,
and the transitions specification. Arrays are declared along with variables,
with the additional requirement of defining the range of the array between
brackets. Transitions preconditions are boolean formulas describing the origin
states for the symbolic transition. In this case the $\&$ symbol stands for the
logical conjunction operator while $|$ stands for the logical disjunction
operator. Postconditions on the other side, describe the changes on the module's
variables (state) by means of assignments to future values. Each assignment is
enclosed by parenthesis, and the variable's name is followed by an apostrophe to
indicate that corresponds to the value of the variable in the reached state
after taking the transition. An $\&$ separates each  assignment. Notice the
similarity with PRISM \cite{DBLP:conf/cpe/KwiatkowskaNP02} syntax for describing
transitions. Along with the assignment of values to future variables, we find
the reset of clocks. A clock is assigned a probability distribution
($clock'=\gamma$) to indicate that it will be reset to a value from that
probability distribution immediately before reaching the new state.

\section{OR Gate}
\label{app:orGate}

For an \emph{OR gate} element with two inputs, its semantics is a function
$[\![]\!]:\actions^{6}\rightarrow\iosa$, where
$[\![ (\OR,2) ]\!](f,u,f_1,u_1,f_2,u_2)$
results in the following \iosa:
  \begin{lstlisting}
module OR
  informf: bool init false;
  informu: bool init false;
  count: [0..2] init 0;

  [`\cv{f$_1$}`??] count=0 -> (count'=1) & (informf'=true);
  [`\cv{f$_1$}`??] count=1 -> (count'=2);
  [`\cv{f$_2$}`??] count=0 -> (count'=1) & (informf'=true);
  [`\cv{f$_2$}`??] count=1 -> (count'=2);

  [`\cv{u$_1$}`??] count=2 -> (count'=1);
  [`\cv{2$_1$}`??] count=1 -> (count'=0) & (informu'=true);
  [`\cv{u$_2$}`??] count=2 -> (count'=1);
  [`\cv{u$_2$}`??] count=1 -> (count'=0) & (informu'=true);

  [`\cv{f}`!!] informf & count!=0 -> (informf'=false);
  [`\cv{u}`!!] informu & count=0 -> (informu'=false);
endmodule
  \end{lstlisting}

In the OR gate model, a counter (\code{count}) is used to register how many
inputs have failed at each moment. The failing of an input increases the
counter, while the repair of an input decreases the counter. We of course take
as a premise that an input will not break two times in a row without being
repaired in the middle, neither will it be repaired if it has not failed. When
the counter changes its value from 0 to 1, the gate has to inform a failure. It
does so in transition at line 16, which gets enabled by the change of variable
\code{informf} either at line 6 or 8. In the same way, when \code{count} becomes
$0$, the repair is informed by enabling transition at line $17$ through the
change of variable \code{informu} either at line $12$ or $14$.

\section{Repair BOXes}
\label{apx:repbox}

For a \emph{repair box with first come first serve policy} element $e\in\E$ with
$n$ inputs, its semantics is a function
$[\![e]\!]:\actions^{3*n}\rightarrow\iosa$, where $[\![ (\RBOX,n)
]\!](\tv{fl}_0,\tv{up}_0,\tv{r}_0,...,\tv{fl}_{n-1},\tv{up}_{n-1},$
$\tv{r}_{n-1})$ results in the following \iosa:
  \begin{lstlisting}
module RBOX % with first come first serve policy
  queue[n]: [0..n] init 0;
  busy: bool init false;
  r: [0..n] init n;
  dummy: [0..0] init 0;

  [`\cv{fl$_0$}`?] -> (dummy'=broken(queue,0));
  ...
  [`\cv{fl$_{n-1}$}`?] -> (dummy'=broken(queue,n-1));

  [!!] fstexclude(queue,0) != -1 & r = n -> (r'=maxfrom(queue,0));

  [`\cv{r$_0$}`!!] !busy & r = 0 -> (busy'=true) & (queue[0]'=0);
  ...
  [`\cv{r$_{n-1}$}`!!] !busy & r = n-1 -> (busy'=true) & (queue[n-1]'=0);

  [`\cv{up$_0$}`?] -> (queue[0]'=0) & (busy'=false) & (r' = n);
  ...
  [`\cv{up$_{n-1}$}`?] -> (queue[n-1]'=0) & (busy'=false) & (r' = n);
endmodule
  \end{lstlisting}
The model for a \emph{repair box with first come first serve policy}  uses an
array to mark down each broken input. Notice that each position in the queue
corresponds to each input. A value $0$ on an index $i$ means that the input $i$
has not failed, while a greater value on that position indicates for ``how
long'' has it been broken. Repair boxes use some syntactic elements present in
FIG (http://dsg.famaf.unc.edu.ar/fig) simulator. These elements do not introduce
a new semantics behavior and are there only to reduce the complexity and
obfuscation that would represent modelling this using only the grammar presented
at App. \ref{appx:iosasymlang}. Examples of this are the function \code{broken}
which given an array, in this case \code{queue}, and an index, in this case
\code{0}, it increases by one the value at that index and every other value
greater than $0$ in the array. In this way we can check the order in which the
inputs failed by comparing the values at the corresponding index. The greater
the value the sooner they broke. The syntactic function \code{fstexclude} on
the other hand, takes an array and a value and returns the index of the first
element with a different value to the one passed. In this case we use it to
check if there is any failed input. If there is at least one, then
\code{maxfrom} function will return the index of the highest value in
\code{queue}, which corresponds to the input who broke first in between all the
broken ones. For a quick determinism analysis we point out that all
\code{broken}, \code{fstexclude}, and \code{maxfrom} are deterministic.
Furthermore all pairs of urgent transitions in the model are confluent given
that their preconditions are mutually exclusive given the value of variable
\tv{r}.

For a \emph{repair box with random policy} element $e\in\E$ with $n$ inputs, its
semantics is a function $[\![e]\!]:\actions^{3*n}\rightarrow\iosa$, where $[\![
(\RBOX,n) ]\!](\tv{fl}_0,\tv{up}_0,\tv{r}_0,...,\tv{fl}_{n-1},\tv{up}_{n-1},$
$\tv{r}_{n-1})$ results in the following \iosa:
  \begin{lstlisting}
module RBOX % with random policy
  broken[n]: bool init false;
  busy: bool init false;
  r: [0..n] init n;

  [`\cv{fl$_0$}`?] -> (broken[0]'=true);
  ...
  [`\cv{fl$_{n-1}$}`?] -> (broken[n-1]'=true);

  [!!] some(broken) & r = n -> (r'=random(broken));

  [`\cv{r$_0$}`!!] !busy & r = 0 -> (busy'=true);
  ...
  [`\cv{r$_{n-1}$}`!!] !busy & r = n-1 -> (busy'=true);

  [`\cv{up$_0$}`?] -> (broken[0]'=false) & (busy'=false) & (r' = n);
  ...
  [`\cv{up$_{n-1}$}`?] -> (broken[n-1]'=false) & (busy'=false) & (r' = n);
endmodule
  \end{lstlisting}
The model for a \emph{random policy repair box} presents two new syntactic
elements from FIG. These are the function \code{some}, which returns a boolean
value indicating if there is some value different to zero in the array, and the
function \code{random}, which models an uniform selection of an index between
the non zero valued positions at an array. Given that these two functions are
deterministic, and with a similar analysis as for the first come first serve
policy repair box, we can deduce that this is also a deterministic model.

\section{Voting gate}
\label{apx:kfromn}

The following \iosa\ model corresponds to the modelling of a 2 from 3 voting
gate. A generalisation to other values of $N$ and $K$ can be easily obtained.
  \begin{lstlisting}
module VOTING
  count: [0..3] init 0;
  inform: bool init false;

  [`\cv{f$_0$}`??] -> (count'=count+1) & (inform'=(count+1=2));
  [`\cv{f$_1$}`??] -> (count'=count+1) & (inform'=(count+1=2));
  [`\cv{f$_2$}`??] -> (count'=count+1) & (inform'=(count+1=2));

  [`\cv{u$_0$}`??] -> (count'=count-1) & (inform'=(count=2));
  [`\cv{u$_1$}`??] -> (count'=count-1) & (inform'=(count=2));
  [`\cv{u$_2$}`??] -> (count'=count-1) & (inform'=(count=2));

  [`\cv{f}`!!] inform & count >= 2 -> (inform'=false);
  [`\cv{u}`!!] inform & count < 2 -> (inform'=false);
endmodule
  \end{lstlisting}
Voting gates are modeled using a counter which counts how many inputs have
failed. This is done by listening to the corresponding fail signals at lines 5
to 7, and repair signals at lines 9 to 11. In these same lines we take into
account if we have just reached the K value (2 in our example) or if we have
just gone down this value, which are the circumstances under which to inform
the failure and repair respectively, which is finally done at lines 13 and 14.
Although an alternative modelling of these gates can be obtained by a combination
of OR and AND gates, one may want to reduce the complexity of the system
modelling by using this model, which also happens to be deterministic.

\section{}
\label{appdx:proofOfNonConfluent}
\begin{proof}[of Proposition \ref{prop:nonconfluent}.]
  Parallel composition does not introduce new non-confluent pair of actions and,
  moreover, it preserves the confluency of its
  components~\cite{DArgenioMonti18}.  Thus, we look at the components in
  isolation.  First notice that transitions in an \iosa\ module are defined
  symbolically. Each symbolic transition in a module describes, in fact, a set
  of \iosa\ transitions, which become concrete when the symbolic transition is
  evaluated on a state that satisfies the guard. Notice also that a state in a
  module is defined by the current values of its variables.  When analyzing that
  two urgent actions $a$ and $b$ are confluent in a module, for each symbolic
  transition $t_a$ and $t_b$ defined for those actions in that module, we look
  for a \emph{non-confluence witness}, i.e, a state that satisfies the guards of
  $t_a$ and $t_b$ and shows that $a$ and $b$ are not confluent (i.e., the pair
  does not satisfy Def.~\ref{def:confluence}).  Note that by only checking
  reachable states in the component, we are already overapproximating the
  reachable states in the composition.

  For this proof we only analyze the case of the AND gate. For other \rft\
  elements, the proof follows similarly. Let $v$ be a vertex in a \rft\ such
  that $\lab(v)=(\AND,2)$. We analyze $\tv{f}_1$ against $\tv{u}_1$ in
  $[\![(\AND,2)]\!]$ (Fig. \ref{fig:pand}) and show that they are not confluent.
  Take state $s$ defined by \code{count=1}, \code{informf=false} and
  \code{informu=false}, which can be easily checked to be reachable. There, we
  find that it enables symbolic transitions at lines $6$ (with label
  $\tv{f}_1$) and $14$ (with label $\tv{u}_1$).  On the one hand, transition
  at line $6$ moves to the state where \code{count=2}, \code{informf=true} and
  \code{informu=false} is reached.  At this point action $\tv{u}_1$ can only
  be performed through transition at line $13$, which yields state $s'$ defined
  by \code{count=1}, \code{informf=true} and \code{informu=true}.
  On the other hand, transition at line $14$ moves to the state where
  \code{count=0}, \code{informf=false} and \code{informu=false}.  This state
  only enables $\tv{f}_1$ at line $7$, which yields state $s''$ defined by
  \code{count=1}, \code{informf=false} and \code{informu=false}.  Since $s'$ and
  $s''$ are two different states, we have proved that $\tv{f}_1$ and
  $\tv{u}_1$ are not confluent.  Similarly, we can show that the pairs
  $(\tv{f},\tv{u}_i$) and $(\tv{u},\tv{f}_i)$, for $i=1,2$, are not
  confluent.

  All other pairs are confluent. Take for instance transitions at lines $7$ and
  $10$ which are defined for actions \tv{f$_1$} and \tv{f$_2$} respectively,
  and the state $s$ defined by \code{count=0}, \code{informf=false} and
  \code{informu=false}.  On the one hand, line $7$ leads to the state where
  \code{count=1}, \code{informf=false} and \code{informu=false} which in turns
  enables \tv{f$_2$} only at line $9$ yielding state $s'$ defined by
  \code{count=2}, \code{informf=true} and \code{informu=false}. On the other
  hand, line $10$ at state $s$ moves to the state where \code{count=1},
  \code{informf=false} and \code{informu=false} which only enables \tv{f$_1$}
  at line line $6$ yielding the same state $s'$.  The proof follows similarly
  from any other reachable state enabling \tv{f$_1$} and \tv{f$_2$} showing,
  thus, that \tv{f$_1$} and \tv{f$_2$} are confluent.  In some other cases
  the proof of confluence follows from the fact that the pair of actions are
  never enabled simultaneously, as it is the case, e.g., of \tv{f} and
  \tv{u} (notice that the guards enabling each one of them are mutually
  exclusive). \qed
\end{proof}

\section{The Spare Gate model}
\label{app:spare}

\paragraph{The Spare basic element (SBE).} For a \emph{SBE}
element $e\in\E$, its semantics is a function
$[\![e]\!]:\actions^{7+5*n}\rightarrow\iosa$, where
$[\![ (\SP,n,\mu,\nu,\gamma) ]\!](\tv{fl},\tv{up},\tv{f},\tv{u},\tv{r},\tv{e},\tv{d},\tv{rq}_0,\tv{asg}_0,$\\$
\tv{rel}_0,\tv{acc}_0,\tv{rj}_0,...,\tv{rq}_{n-1}, \tv{asg}_{n-1},\tv{rel}_{n-1},\tv{acc}_{n-1},\tv{rj}_{n-1})$
results in the following pair of \iosa\ modules:

\begin{lstlisting}
module SBE
  fc, dfc, rc : clock;
  inform : [0..2] init 0;
  active : bool init false;
  broken : [0..2] init 0;

  [`\cv{e}`??] !active -> (active'=true) & (fc'=$\mu$);
  [`\cv{d}`??] active -> (active'=false) & (dfc'=$\nu$);

  [`\cv{fl}`!] active & broken=0 @ fc   -> (inform'=1) & (broken'=1);
  [`\cv{fl}`!] !active & broken=0 @ dfc -> (inform'=1) & (broken'=1);
  [`\cv{r}`??]                          -> (broken'=2) & (rc'=$\gamma$);
  [`\cv{up}`!] active & broken=2 @ rc   -> (inform'=2) & (broken'=0) & (fc'=$\mu$);
  [`\cv{up}`!] !active & broken=2 @ rc  -> (inform'=2) & (broken'=0) & (dfc'=$\mu$);

  [`\cv{f}`!!] inform=1 -> (inform'=0);
  [`\cv{u}`!!] inform=2 -> (inform'=0);
endmodule
\end{lstlisting}

\begin{lstlisting}
module MUX
  queue[n]: [0..3] init 0; % idle, requesting, reject, using
  avail: bool init true;
  broken: bool init false;
  enable: [0..2] init 0;

  [`\cv{fl}`?] -> (broken'=true);
  [`\cv{up}`?] -> (broken'=false);

  [`\cv{e}`!!] enable=1 -> (enable'=0);
  [`\cv{d}`!!] enable=2 -> (enable'=0);

  [`\cv{rq$_0$}`??]  queue[0]=0 & (broken | !avail) -> (queue[0]'=2);
  [`\cv{rq$_0$}`??]  queue[0]=0 & !broken & avail   -> (queue[0]'=1);
  [`\cv{asg$_0$}`!!] queue[0]=1 & !broken & avail   -> (queue[0]'=3) & (avail'=false);
  [`\cv{rj$_0$}`!!]  queue[0]=2 -> (queue[0]'=1);
  [`\cv{rel$_0$}`??] queue[0]=3 -> (queue[0]'=0) & (avail'=true) & (enable'=2);
  [`\cv{acc$_0$}`??] -> (enable'=1);
  ...
  [`\cv{rq$_{n-1}$}`??]  queue[n-1]=0 & (broken | !avail) -> (queue[n-1]'=2);
  [`\cv{rq$_{n-1}$}`??]  queue[n-1]=0 & !broken & avail   -> (queue[n-1]'=1);
  [`\cv{asg$_{n-1}$}`!!] queue[n-1]=1 & queue[n-2]=0 & ... & queue[0]=0 & !broken & avail
                  -> (queue[n-1]'=3) & (avail'=false);
  [`\cv{rj$_{n-1}$}`!!]  queue[n-1]=2 -> (queue[n-1]'=1);
  [`\cv{rel$_{n-1}$}`??] queue[n-1]=3 -> (queue[n-1]'=0) & (avail'=true) & (enable'=2);
  [`\cv{acc$_{n-1}$}`??] -> (enable'=1);

endmodule
\end{lstlisting}
The model for a Spare basic element consists in two \iosa\ modules. One of them
presents the behaviour of a basic element which can be enabled and disabled, and
an other module, the multiplexer, which presents the means to manage the
sharing of the SBE between the interested Spare Gates. In this case, we have
decided to model the multiplexer with a priority policy, which prioritizes
lower index input spare gates to higher indexed ones (notice assignment
transitions at line 15 and 22 of the multiplexer module.) Other kinds of
policies can be defined as for repair box gates. In the model, actions
\tv{rq$_i$} indicate that the spare gate input $i$ is requesting the spare.
\tv{acc$_i$} indicates that input $i$ accepts the spare that has previously
been assigned to it through action \tv{asg$_i$}. On the other hand action
\tv{rj$_i$} indicates that it rejects it. Action \tv{rel$_i$} indicates
that input $i$ is releasing the spare that has previously been assigned to it.
Finally actions \tv{e}  and \tv{d} enable and disable the spare basic element
when needed.

\paragraph{The Spare Gate (SG).}
For a spare gate element $e\in\E$ with priority policy, its semantics is
a function $[\![e]\!]:\actions^{4+7*n} \rightarrow\iosa$, where
$[\![ (\SG,n) ]\!](\tv{f},\tv{u},\tv{fl}_0,\tv{up}_0,\tv{fl}_1,\tv{up}_1,$
$\tv{rq}_1,\tv{asg}_1,\tv{acc}_1,$$\tv{rj}_1,\tv{rel}_1...,\tv{fl}_n,\tv{up}_n,\tv{rq}_n,\tv{asg}_n,\tv{acc}_n,\tv{rj}_n,\tv{rel}_n)$
results in the following \iosa:

\begin{lstlisting}
module SPAREGATE
  state: [0..4] init 0; // on main, request, wait, on spare, broken
  inform: [0..2] init 0;
  release: [-n..n] init 0;
  idx: [1..n] init 1;

  [`\cv{fl${_0}$}`?] state=0 -> (state'=1) & (idx'=1);
  [`\cv{up${_0}$}`?] state=4 -> (state'=0) & (inform'=2);
  [`\cv{up${_0}$}`?] state=3 & idx=1 -> (state'=0) & (idx'=1) & (release'=1);
  ...
  [`\cv{up${_0}$}`?] state=3 & idx=n -> (state'=0) & (idx'=1) & (release'=n);

  [`\cv{fl${_1}$}`?] state=3 & idx=1 -> (release'=1);
  ...
  [`\cv{fl${_n}$}`?] state=3 & idx=n -> (release'=n);

  [`\cv{rq${_1}$}`!!] state=1 & idx=1 -> (state'=2);
  ...
  [`\cv{rq${_n}$}`!!] state=1 & idx=n -> (state'=2);

  [`\cv{asg${_1}$}`??] state=0 | state=1 | state=3 -> (release'=1);
  [`\cv{asg${_1}$}`??] state=2 & idx=1 -> (release'=-1) & (state'=3);
  [`\cv{asg${_1}$}`??] state=4 -> (release'=-1) &  (state'=3) & (idx'=1) & (inform'=2);
  ...
  [`\cv{asg${_n}$}`??] state=0 | state=1 | state=3 -> (release'=n);
  [`\cv{asg${_n}$}`??] state=2 & idx=n -> (release'=-n) & (state'=3);
  [`\cv{asg${_n}$}`??] state=4 -> (release'=-n) & (state'=3) & (idx'=n) & (inform'=2);

  [`\cv{rj${_1}$}`??] state=2 & idx=1 -> (idx'=2) & (state'=1);
  [`\cv{rj${_2}$}`??] state=2 & idx=2 -> (idx'=3) & (state'=1);
  ...
  [`\cv{rj${_n}$}`??] state=2 & idx=n -> (state'=4) & (idx'=1) & (inform'=1);

  [`\cv{rel${_1}$}`!!] release=1 & !(state=3 & idx=1) -> (release'= 0);
  [`\cv{rel${_1}$}`!!] release=1 & state=3 & idx=1 -> (release'= 0) & (state'=1) & (idx'=1);
  ...

  [`\cv{rel${_n}$}`!!] release=n & !(state=3 & idx=n) -> (release'=0);
  [`\cv{rel${_n}$}`!!] release=n & state=3 & idx=n -> (release'= 0) & (state'=1) & (idx'=1);

  [`\cv{acc${_1}$}`!!] release=-1 -> (release'= 0);
  ...
  [`\cv{acc${_n}$}`!!] release=-n -> (release'=0);

  [`\cv{f}`!!] inform = 1 -> (inform'=0);
  [`\cv{u}`!!] inform = 2 -> (inform'=0);
endmodule
\end{lstlisting}
The Spare Gate model is using a priority policy over the available Spare BEs.
This means that when looking for a Spare BE, it will start asking for it to the
lower index inputs and go on with higher index until obtaining a replacement.
Other policies can be defined into the spare gate too, just as with the
multiplexer and the repair box. In the SG model, a variable \code{state}
distinguishes from when the SG is working with its main BE, requesting a SBE,
waiting for a response from its inputs, working on a SBE or broken. A vector
named \code{release} indicates for each SBE input $i$ when the SG has to release
(value $i$) or accept (value $-i$) the assignment of that SBE. A variable
\code{idx} indicates which of the inputs to request next. At line $7$ the SG
defines the transition which starts with the SBE acquiring protocol whenever the
main BE fails. The following transitions up to line $15$ are there to release
the acquired SBEs whenever they fail or the main BE is repaired. Transitions
from lines 17 to 19 are there to request for each available SBE. After doing so,
we need to wait for a response from the corresponding multiplexer
(\code{state'=2}). The request can be rejected (lines 29 to 32), and we proceed
by asking for the next SBE by setting \code{idx} to the corresponding value if
there is one, or by failing in case none of the SBE where available
(\code{state'=4} at line 32). A SBE can be assigned to us when not needed
anymore (lines $21$ and $25$), or when we where expecting it in order to avoid
failing (lines $22$ and $26$), or when we had already failed and thus we get
repaired by using it (lines $23$ and $27$). I may want to release a SBE when
it is assigned to me and I do not need it (lines $34$ and $38$) or when it fails
while I am using it (lines $35$ and $39$). Finally we accept assigned SBEs at
lines $41$ to $43$ and we signal failure at line $45$ and repair at line $46$.
To further understand the meaning and intuition of each transition we refer the
reader to the SBE description which heavily synchronizes its transitions with
the SG  model.



\begin{thebibliography}{10}
\providecommand{\url}[1]{\texttt{#1}}
\providecommand{\urlprefix}{URL }

\bibitem{amari2003new}
Amari, S., Dill, G., Howald, E.: A new approach to solve dynamic fault trees.
  In: Reliability and Maintainability Symposium, 2003. Annual. pp. 374--379.
  IEEE (2003)

\bibitem{DBLP:conf/valuetools/BeccutiRFH08}
Beccuti, M., Raiteri, D.C., Franceschinis, G., Haddad, S.: Non deterministic
  repairable fault trees for computing optimal repair strategy. In: Baras,
  J.S., Courcoubetis, C. (eds.) 3rd International {ICST} Conference on
  Performance Evaluation Methodologies and Tools, {VALUETOOLS} 2008, Athens,
  Greece, October 20-24, 2008. p.~56. {ICST/ACM} (2008),
  \url{https://doi.org/10.4108/ICST.VALUETOOLS2008.4411}

\bibitem{DBLP:journals/tse/BobbioFGP03}
Bobbio, A., Franceschinis, G., Gaeta, R., Portinale, L.: Parametric fault tree
  for the dependability analysis of redundant systems and its high-level petri
  net semantics. {IEEE} Trans. Software Eng.  29(3),  270--287 (2003),
  \url{https://doi.org/10.1109/TSE.2003.1183940}

\bibitem{bobbio2004parametric}
Bobbio, A., Raiteri, D.C.: Parametric fault trees with dynamic gates and repair
  boxes. In: Reliability and Maintainability, 2004 Annual Symposium-RAMS. pp.
  459--465. IEEE (2004)

\bibitem{DBLP:conf/atva/BoudaliCS07}
Boudali, H., Crouzen, P., Stoelinga, M.: A compositional semantics for dynamic
  fault trees in terms of interactive markov chains. In: Namjoshi, K.S.,
  Yoneda, T., Higashino, T., Okamura, Y. (eds.) {ATVA} 2007. LNCS, vol. 4762,
  pp. 441--456. Springer (2007),
  \url{https://doi.org/10.1007/978-3-540-75596-8_31}

\bibitem{DBLP:conf/dsn/BoudaliCS07}
Boudali, H., Crouzen, P., Stoelinga, M.: Dynamic fault tree analysis using
  input/output interactive markov chains. In: {DSN} 2007. pp. 708--717. {IEEE}
  Computer Society (2007), \url{https://doi.org/10.1109/DSN.2007.37}

\bibitem{DBLP:journals/tdsc/BoudaliCS10}
Boudali, H., Crouzen, P., Stoelinga, M.: A rigorous, compositional, and
  extensible framework for dynamic fault tree analysis. {IEEE} Trans.
  Dependable Sec. Comput.  7(2),  128--143 (2010),
  \url{https://doi.org/10.1109/TDSC.2009.45}

\bibitem{DBLP:journals/ress/BoudaliD05}
Boudali, H., Dugan, J.B.: A discrete-time bayesian network reliability modeling
  and analysis framework. Rel. Eng. {\&} Sys. Safety  87(3),  337--349 (2005),
  \url{https://doi.org/10.1016/j.ress.2004.06.004}

\bibitem{DBLP:journals/tr/BoudaliD08}
Boudali, H., Dugan, J.B.: Corrections on "a continuous-time bayesian network
  reliability modeling and analysis framework". {IEEE} Trans. Reliability
  57(3),  532--533 (2008), \url{https://doi.org/10.1109/TR.2008.925796}

\bibitem{thesis/unc/Budde2017}
Budde, C.E.: Automation of Importance Splitting Techniques for Rare Event
  Simulation. Ph.D. thesis, Universidad Nacional de C\'ordoba (2017)

\bibitem{DBLP:conf/epew/BuddeDH15}
Budde, C.E., D'Argenio, P.R., Hermanns, H.: Rare event simulation with fully
  automated importance splitting. In: Beltr{\'{a}}n, M., Knottenbelt, W.J.,
  Bradley, J.T. (eds.) {EPEW} 2015. LNCS, vol. 9272, pp. 275--290. Springer
  (2015), \url{https://doi.org/10.1007/978-3-319-23267-6_18}

\bibitem{budde2016compositional}
Budde, C.E., D{\textquoteright}Argenio, P.R., Monti, R.E.: Compositional
  Construction of Importance Functions in Fully Automated Importance Splitting.
  ACM (2017), \url{http://dx.doi.org/10.4108/eai.25-10-2016.2266501}

\bibitem{DBLP:conf/tacas/BuddeDHHJT17}
Budde, C.E., Dehnert, C., Hahn, E.M., Hartmanns, A., Junges, S., Turrini, A.:
  {JANI:} quantitative model and tool interaction. In: Legay, A., Margaria, T.
  (eds.) {TACAS} 2017. LNCS, vol. 10206, pp. 151--168 (2017),
  \url{https://doi.org/10.1007/978-3-662-54580-5_9}

\bibitem{DBLP:conf/issre/CoppitSD00}
Coppit, D., Sullivan, K.J., Dugan, J.B.: Formal semantics for computational
  engineering: {A} case study on dynamic fault trees. In: {ISSRE} 2000. pp.
  270--282. {IEEE} Computer Society (2000),
  \url{https://doi.org/10.1109/ISSRE.2000.885878}

\bibitem{DBLP:conf/formats/DArgenioLM16}
D'Argenio, P.R., Lee, M.D., Monti, R.E.: Input/output stochastic automata -
  compositionality and determinism. In: Fr{\"{a}}nzle, M., Markey, N. (eds.)
  {FORMATS} 2016. LNCS, vol. 9884, pp. 53--68. Springer (2016),
  \url{https://doi.org/10.1007/978-3-319-44878-7_4}

\bibitem{159800}
Dugan, J.B., Bavuso, S.J., Boyd, M.A.: Dynamic fault-tree models for
  fault-tolerant computer systems. IEEE Transactions on Reliability  41(3),
  363--377 (Sep 1992)

\bibitem{DArgenioMonti18}
D’Argenio, P.R., Monti, R.E.: Input/output stochastic automata with urgency:
  Confluence and weak determinism. In: International Colloquium on Theoretical
  Aspects of Computing. pp. 132--152. Springer (2018)

\bibitem{franceschinis2002towards}
Franceschinis, G., Gribaudo, M., Iacono, M., Mazzocca, N., Vittorini, V.:
  Towards an object based multi-formalism multi-solution modeling approach. In:
  Proc. of the Second Workshop on Modelling of Objects, Components and Agents
  Aarhus (MOCA02), Denmark. vol.~26, pp. 47--65 (2002)

\bibitem{DBLP:journals/ress/GeLYZC15}
Ge, D., Lin, M., Yang, Y., Zhang, R., Chou, Q.: Quantitative analysis of
  dynamic fault trees using improved sequential binary decision diagrams. Rel.
  Eng. {\&} Sys. Safety  142,  289--299 (2015),
  \url{https://doi.org/10.1016/j.ress.2015.06.001}

\bibitem{gulati1997modular}
Gulati, R., Dugan, J.B.: A modular approach for analyzing static and dynamic
  fault trees. In: Reliability and Maintainability Symposium. 1997 Proceedings,
  Annual. pp. 57--63. IEEE (1997)

\bibitem{haasl1981fault}
Haasl, D.F., Roberts, N., Vesely, W., Goldberg, F.: Fault tree handbook. Tech.
  rep., Nuclear Regulatory Commission, Washington, DC (USA). Office of Nuclear
  Regulatory Research (1981)

\bibitem{DBLP:conf/dsn/JungesGKS16}
Junges, S., Guck, D., Katoen, J., Stoelinga, M.: Uncovering dynamic fault
  trees. In: {DSN} 2016. pp. 299--310. {IEEE} Computer Society (2016),
  \url{https://doi.org/10.1109/DSN.2016.35}

\bibitem{DBLP:conf/cpe/KwiatkowskaNP02}
Kwiatkowska, M.Z., Norman, G., Parker, D.: {PRISM:} probabilistic symbolic
  model checker. In: Field, T., Harrison, P.G., Bradley, J.T., Harder, U.
  (eds.) {TOOLS} 2002. LNCS, vol. 2324, pp. 200--204. Springer (2002),
  \url{https://doi.org/10.1007/3-540-46029-2_13}

\bibitem{manian1999bridging}
Manian, R., Coppit, D.W., Sullivan, K.J., Dugan, J.B.: Bridging the gap between
  systems and dynamic fault tree models. In: Reliability and Maintainability
  Symposium, 1999. pp. 105--111. IEEE (1999)

\bibitem{DBLP:journals/tr/MerleRLB10}
Merle, G., Roussel, J., Lesage, J., Bobbio, A.: Probabilistic algebraic
  analysis of fault trees with priority dynamic gates and repeated events.
  {IEEE} Trans. Reliability  59(1),  250--261 (2010),
  \url{https://doi.org/10.1109/TR.2009.2035793}

\bibitem{DBLP:books/daglib/0067019}
Milner, R.: Communication and concurrency. {PHI} Series in computer science,
  Prentice Hall (1989)

\bibitem{DBLP:conf/dsn/RaiteriIFV04}
Raiteri, D.C., Iacono, M., Franceschinis, G., Vittorini, V.: Repairable fault
  tree for the automatic evaluation of repair policies. In: {DSN} 2004. pp.
  659--668. {IEEE} Computer Society (2004),
  \url{https://doi.org/10.1109/DSN.2004.1311936}

\bibitem{DBLP:journals/ress/Rauzy11}
Rauzy, A.: Sequence algebra, sequence decision diagrams and dynamic fault
  trees. Rel. Eng. {\&} Sys. Safety  96(7),  785--792 (2011),
  \url{https://doi.org/10.1016/j.ress.2011.02.005}

\bibitem{Rubino:2009:RES:1643623}
Rubino, G., Tuffin, B.: Rare Event Simulation Using Monte Carlo Methods. Wiley
  Publishing (2009)

\bibitem{ruijters2016fault}
Ruijters, E., Guck, D., Drolenga, P., Stoelinga, M.: Fault maintenance trees:
  reliability centered maintenance via statistical model checking. In:
  Reliability and Maintainability Symposium (RAMS), 2016 Annual. pp. 1--6. IEEE
  (2016)

\bibitem{DBLP:conf/safecomp/RuijtersRBS17}
Ruijters, E., Reijsbergen, D., de~Boer, P., Stoelinga, M.: Rare event
  simulation for dynamic fault trees. In: Tonetta, S., Schoitsch, E., Bitsch,
  F. (eds.) Computer Safety, Reliability, and Security - 36th International
  Conference, {SAFECOMP} 2017, Trento, Italy, September 13-15, 2017,
  Proceedings. Lecture Notes in Computer Science, vol. 10488, pp. 20--35.
  Springer (2017), \url{https://doi.org/10.1007/978-3-319-66266-4\_2}

\bibitem{DBLP:journals/csr/RuijtersS15}
Ruijters, E., Stoelinga, M.: Fault tree analysis: {A} survey of the
  state-of-the-art in modeling, analysis and tools. Computer Science Review
  15,  29--62 (2015), \url{https://doi.org/10.1016/j.cosrev.2015.03.001}

\bibitem{DBLP:reference/npe/Villen-AltamiranoV11}
Vill{\'{e}}n{-}Altamirano, M., Vill{\'{e}}n{-}Altamirano, J.: The rare event
  simulation method {RESTART:} efficiency analysis and guidelines for its
  application. In: Kouvatsos, D.D. (ed.) Network Performance Engineering - {A}
  Handbook on Convergent Multi-Service Networks and Next Generation Internet,
  LNCS, vol. 5233, pp. 509--547. Springer (2011),
  \url{https://doi.org/10.1007/978-3-642-02742-0_22}

\bibitem{xing2009efficient}
Xing, L., Dugan, J.B., Morrissette, B.A.: Efficient reliability analysis of
  systems with functional dependence loops. Eksploatacja I
  Niezawodnosc-Maintenance and Reliability (3),  65--69 (2009)

\bibitem{DBLP:journals/ress/XingSD11}
Xing, L., Shrestha, A., Dai, Y.: Exact combinatorial reliability analysis of
  dynamic systems with sequence-dependent failures. Rel. Eng. {\&} Sys. Safety
  96(10),  1375--1385 (2011), \url{https://doi.org/10.1016/j.ress.2011.05.007}

\end{thebibliography}
\end{document}